\newtheorem{lemma}{Lemma}
\newtheorem{theorem}{Theorem}
\newtheorem{definition}{Definition}
\newtheorem{proof}{Proof}
\pgfplotsset{compat=newest}
\definecolor{adjusted_cyan}{RGB}{0,175,175}
\definecolor{adjusted_orange}{RGB}{255,135,0}
\definecolor{soft_blue}{RGB}{0,115,230}
\definecolor{soft_red}{RGB}{240,50,50}
\begin{document}

\title{Hamster: A Fast Synchronous Byzantine Fault Tolerance Protocol}
\author{Ximing Fu, Mo Li, Qingming Zeng, Tianyang Li, Shenghao Yang, Yonghui Guan and Chuanyi Liu
}
\markboth{ }%
{Hamster: A Fast Synchronous Byzantine Fault Tolerance Protocol}

\maketitle
\begin{abstract}
This paper introduces Hamster, a novel synchronous Byzantine Fault Tolerance protocol that achieves better performance and has weaker dependency on synchrony. 

Specifically, Hamster employs coding techniques to significantly decrease communication complexity and addresses coding related security issues. Consequently, Hamster achieves a throughput gain that increases linearly with the number of nodes, compared to Sync HotStuff. By adjusting the block size, Hamster outperforms Sync HotStuff in terms of both throughput and latency. Moreover, With minor modifications, Hamster can also function effectively in mobile sluggish environments, further reducing its dependency on strict synchrony.

We implement Hamster and the experimental results demonstrate its performance advantages. Specifically, Hamster's throughput in a network of $9$ nodes is $2.5\times$ that of Sync HotStuff, and this gain increases to $10$ as the network scales to $65$ nodes.
\end{abstract}







\begin{IEEEkeywords}
Byzantine fault tolerance, coding technique, synchronous, mobile sluggish, safety, liveness.
\end{IEEEkeywords}

\section{Introduction}
\label{sec:introduction}

\IEEEPARstart{B}{yzantine} Fault Tolerance (BFT) protocols~\cite{LamportSP82} are designed to address the consistency issues in distributed systems with Byzantine faulty nodes, enabling honest nodes to reach a consensus on a value while ensuring both safety and liveness properties.
A BFT protocol is synchronous if its security is guaranteed only under the condition that designer-specified tasks, typically involving communication coupled with necessary computing, are completed within a pre-defined and known synchronous bound \(\Delta\). This condition is also known as the synchronous assumption.
Synchronous BFT protocols are particularly valued for their ability to tolerate faults in half of the nodes, compared to only one-third fault tolerance in partial-synchronous and asynchronous BFT protocols~\cite{fitzi2002generalized,DworkLS88}.

Despite this superior fault tolerance, the practical application of synchronous BFT faces two challenges. First, existing synchronous BFT protocols~\cite{hanke2018dfinity,SyncHotstuff} generally suffer from lower throughput compared to their partial-synchronous counterparts \cite{HotStuff19, sui2022marlin}. Specifically, the low throughput of Sync HotStuff~\cite{SyncHotstuff} can be attributed to its communication complexity.
Second, the security of synchronous BFT is dependent on adhering to the synchronous assumption. External factors, such as network fluctuations, can delay message transmission, thereby causing nodes to violate the synchronous assumption and subsequently compromise the protocol's security.



To address these challenges, we introduce Hamster, a novel synchronous BFT protocol. Hamster successfully applies coding schemes in synchronous settings by resolving the security issues arising from coding—specifically, the unverifiable undecodability caused by a reduced proportion of honest nodes, thereby reducing communication complexity. Specifically, Hamster transforms the broadcasting and forwarding of complete content in the Steady-State phase of Sync HotStuff into the distribution and exchange of encoded chunks, shifting consensus from the content itself to a short digest of the content. 
Additionally, a decoupled Follow phase is added for a second distribution, ensuring that all honest nodes eventually acquire the content corresponding to this digest. 
Furthermore, with minor modifications, Hamster can be adapted to the mobile sluggish model~\cite{GuoPS19}. This model allows some honest nodes to temporarily delay sending and receiving messages beyond $\Delta$, significantly reducing Hamster's reliance on the synchronous assumption.

Specifically, Hamster achieves the following advantages. 

\textbf{Near-optimal and near-balanced communication}:
For an $n$-node system committing to a content of size $m$, Hamster reduces the overall communication complexity to  \(O(mn)\), compared with \(O(mn^2)\) for Sync HotStuff~\cite{SyncHotstuff}. Given that each node must receive the consensus content from at least one another node, the total system communication is lower-bounded by \(mn\). Thus, Hamster achieves near-optimal communication.
Additionally, Hamster ensures that the communication load is nearly balanced across all nodes, preventing any single node, particularly the leader, from becoming a communication bottleneck.

\textbf{Significantly improved performance}: 
Hamster achieves a maximum throughput gain of \(O(n)\) compared to Sync HotStuff in bandwidth-limited environments. It also provides a flexible trade-off between high throughput and low latency by adjusting the size of the consensus content. A wide range of content sizes exists where Hamster outperforms Sync HotStuff in both metrics. Furthermore, by varying content size, Hamster enables the transformation of various environments into bandwidth-limited environments.

\textbf{Weaker synchrony dependency}:
The decoupled Follow phase in Hamster guarantees that all nodes obtain the same content without strict adherence to the synchronous assumption. Consequently, even if the Follow phase and the Steady-State phase occur concurrently, nodes are still able to prioritize resource allocation to fulfill the requirements of the Steady-State phase.  
Moreover, Hamster's adaptability to the mobile sluggish model mitigates the impact of network fluctuations, thereby reducing the protocol’s overall dependence on synchrony.

The remainder of this paper is organized as follows. Section~\ref{sec::background} provides an essential background and delineates the motivation underpinning this study. The Hamster protocol, developed within the standard synchrony model, is detailed in Section~\ref{sec::Hamster}, together with the  performance analysis. In Section~\ref{sec::sluggish}, marginal adjustments are made to the protocol to effectively handle mobile sluggish faults. Empirical findings, demonstrating the protocol's real-world applicability, are presented in Section~\ref{sec::experiment}. The paper concludes with Section~\ref{sec::conclusion}, summarizing key insights and contributions.

\section{Preliminaries}
\label{sec::background}
A distributed service system receives requests from external clients and needs to provide reliable feedback. To ensure reliability, such systems must exhibit fault tolerance, maintaining consistent feedback even when some nodes fail. This paper considers a Byzantine fault model~\cite{LamportSP82}, in which faulty nodes, also referred to as malicious nodes, can fail in any manner.

Fault tolerance in distributed systems is typically achieved through state machine replication. This method treats each node as an identical state machine, each with a linearizable log. Fault tolerance is ensured as long as all state machines start from the same initial state and have identical log sequences. Byzantine Fault Tolerance (BFT) protocols construct the identical sequence by committing requests into the linearizable log. An effective BFT protocol must support the following two properties:
\begin{enumerate}
\renewcommand{\labelenumi}{}
    \item \textbf{Safety}: At the same log position, two honest nodes will not commit different values.
    \item \textbf{Liveness}: Every honest node can commit a value within a finite time.
\end{enumerate}


Our discussion is grounded in common computational and communication settings typical of systems with \(n\) nodes, among which \(f\) are malicious. These nodes are limited in computational capabilities, preventing them from breaking cryptographic primitives that rely on computationally hard problems. Execution times for computation tasks of each node are fixed and measurable. The system is equipped with a fixed-bandwidth, error-free and authenticated communication channel between every pair of nodes. For the sake of simplicity, we assume all requests to be ordered have been received by all nodes prior to processing.

Synchronous BFT protocols guarantee safety and liveness only under the synchronous assumption. In such systems, all nodes are aware of a fixed constant \(\Delta\), commonly referred to as the synchronous bound. The synchronous assumption requires that a step-typically consisting of communication and corresponding processing as defined by the protocol designer-must be completed within \(\Delta\). Nonetheless, the precise timing of these tasks may be subject to manipulation by malicious nodes. A significant advantage of synchronous BFT protocols is their ability to tolerate faults in up to half of the total number of nodes. In contrast, partially-synchronous and asynchronous BFT protocols, which do not rely on the synchronous assumption, can only tolerate faults in up to one-third of the nodes~\cite{fitzi2002generalized,DworkLS88}.

\subsection{Throughput Issue}
Throughput, measured as the number of requests committed per unit of time (Kops/s), is a critical performance metric.

Traditional synchronous BFT protocols~\cite{hanke2018dfinity} typically exhibit lower throughput, primarily from their lock-step scheme, which requires each step to incur a time cost of \(\Delta\). This design forces nodes to wait until the completion of the \(\Delta\) period for each step, regardless of their actual progress.
Consequently, optimizing these protocols often only involves reducing the number of steps.

Sync HotStuff~\cite{SyncHotstuff} introduces an unlock-step scheme, where steps are event-driven, allowing the next step to begin immediately once a predefined condition is met. This scheme exhibits significant advantages under the assumption that the actual delay $\delta$, defined as the time required solely for task execution excluding any waiting periods, is much smaller than \(\Delta\). Under these conditions, the unlock-step scheme substantially reduces the duration of each step, thus enhancing throughput. Currently, Sync HotStuff is celebrated as the state of the art synchronous BFT protocol. Moreover, the unlock-step introduces a new option for optimizing throughput by reducing the time consumed in each step.


We conducted experiments on Sync HotStuff, as shown in Fig.~\ref{fig:sync_hotstuff,varing bd}. The results demonstrate that Sync HotStuff necessitates a substantial increase in bandwidth as the system scales up in order to maintain adequate throughput. The bandwidth in real-world is limited, indicating a decline in performance in large-scale systems. 

Inspired by~\cite{gai2021dissecting}, we break down $\delta$ into three components: propagation delay $\delta_p$, transmission delay $\delta_t$ and computation delay $\delta_c$. The propagation delay, which refers to the signal’s transmission time across the network, is a random variable and the primary source of variability in message delivery. In contrast, both the transmission delay and the computation delay are fixed and measurable, defined as the amount of task-specific work divided by the available resources.

This decomposition enables us to analyze the slower performance of Sync HotStuff in large-scale networks, primarily due to the intensive communication required in its initial two steps. Specifically, this involves a leader node broadcasting the consensus content to all nodes, followed by all nodes exchanging this content with each other. In situations where bandwidth is limited and the system scale is large, this pairwise communication significantly increases \(\delta_t\), making these steps time-consuming and thereby decreasing the overall throughput. Naturally, this leads us to seek ways to reduce communication during these critical steps.
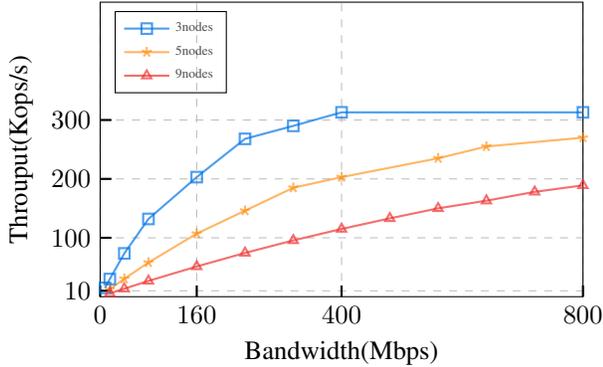
\begin{figure}
    \centering
    \begin{tikzpicture}
       \begin{axis}[
        legend pos= north west,
        width=8cm,height=5.5cm,
        xlabel={Bandwidth(Mbps)},ylabel={Throuput(Kops/s)},
        xmin=0, xmax=800,ymin=0, ymax=500,
        xtick={0,160,400,800},ytick={10,100,200,300},
        grid=both,grid style={dashed},
        major tick style={thick,black},
        xtick pos=bottom,ytick pos=left,
        ]
        \addplot[soft_blue,thick,mark=square,opacity=0.7] coordinates {
           (8,15)(16,30)(40,73.5)(80,132)(160,203)(240,268)(320,290)(400,313)(800,313)};
         \addplot[adjusted_orange,thick,mark=star,opacity=0.6] coordinates {
            (16,12.5)(40,30.4)(80,58)(160,107)(240,146)(320,185)(400,203)(560,235)(640,255)(800,270)};    
        \addplot[soft_red,thick,mark=triangle,opacity=0.7] coordinates {
            (16,5.6)(40,13.6)(80,26.8)(160,51.3)(240,74.3)(320,95.6)(400,115)(480,133)(560,150)(640,163)(720,178)(800,189)};
        \pgfplotsset{
            legend style={fill=none, draw=black, text opacity=0.7, fill opacity=0.7,font=\tiny}}
        \legend{3nodes,5nodes, 9nodes,17nodes}
    \end{axis}
    \end{tikzpicture}
    \caption{Throughput of Sync HotStuff over bandwidth for different node numbers with $\Delta=100$ ms.}
    \label{fig:sync_hotstuff,varing bd}
\end{figure}

\subsection{Coding Technique and Issues}
An $(n,k)$ linear code over a finite field $\mathbb{F}$ is characterized by a generator matrix $G$. This code transforms an input sequence \(x=[x_1,\ldots,x_k]\) into an output sequence \(y = [y_1, \ldots, y_n]\) through the linear transformation \(y = Gx\), where each element \(x_i\) and \(y_i\) represents a chunk of symbols of equal length.

During transmission, each chunk may face two issues: (1) a node fails to receive a chunk, resulting in an erasure, and (2) a chunk is generated in violation of the coding rules, leading to an error. 
An $(n,k)$ linear code is termed Maximum Distance Separable (MDS) code if it can recover from $a$ erasures and $b$ errors, given the condition $n-k = 2b+a$. Among various codes, MDS codes offer the highest error-correction capability. Reed-Solomon (RS) codes~\cite{reed1960polynomial}, a typical MDS code, are employed in this work by default due to its fast encoding and decoding facilitated by Single Instruction Multiple Data (SIMD) techniques~\cite{ISAL}.

Revisiting Sync HotStuff from a coding perspective, we see that the initial two steps involve the leader node broadcasting the consensus content to all nodes using \((n,1)\) coding, followed by an $n$-to-$n$ chunk exchange among all \(n\) nodes and subsequent decoding to retrieve the correct consensus content. This coding scheme, when used solely to address erasure issues, can correct up to \(n-1 = 2f\) erasures. For a malicious leader, a maximum of \(2f\) erasures can be orchestrated without leaving evidence-\(f\) from malicious nodes and \(f\) from honest nodes that truly did not receive any chunks, all of which can be corrected. Errors, combined with signatures, can be used to detect the leader's malicious behaviors.

If a method to counteract the malicious leader’s ability to create erasures is found, there would be no need for coding to correct \(2f\) erasures, because with an honest leader, at most \(f\) erasures would occur. In such cases, an \((n, k \leq f+1)\) code could be used to address these erasures, reducing total communication to \(\frac{1}{k}\) of that required by \((n,1)\) coding. Similar coding techniques have been extensively applied in partially-synchronous and asynchronous BFT protocols~\cite{alhaddad2022balanced, ADD21, AVID05, miller2016honey, AVID22, DispersedLedger22, AVID-opt22, kaklamanis2022poster}, which typically adopt a \emph{disperse-then-agree} architecture, decoupling the distribution of chunks from reaching an agreement on a short digest.

This approach, however, is not suitable for synchronous settings due to the smaller proportion of honest nodes. Consider a partially-synchronous environment with \(n = 3f+1\).
Here, if a malicious leader wishes to prevent honest nodes from decoding under an \((n, f+1)\) coding scheme, they must withhold chunks from at least \(f+1\) honest nodes to create a total of \(2f+1\) erasures. These honest nodes now outnumber the malicious ones. If they declare the chunk erasures, the leader will be replaced, as all honest nodes are convinced that at least one honest node did not receive a chunk.
This means either all honest nodes can decode, or the leader is exposed for not properly distributing chunks. In synchronous settings, where \(n=2f+1\), however, even with an \((n,2)\) code, a malicious leader only needs a total of \(2f\) erasures to prevent honest nodes from decoding, which involves withholding chunks from \(f\) honest nodes. Such a small group of nodes is unable to initiate leader replacement.
This creates a problematic situation where honest nodes can neither decode nor replace the leader, causing the protocol to stall and lose liveness. Using a code with \(k>2\) would only make it easier for the malicous leader to orchestrate this situation. Therefore, this is not just a choice of coding but a fundamental issue in synchronous environments, motivating us to propose the Hamster protocol.

\subsection{Cryptographic Primitives}

We assume the existence of a Public Key Infrastructure (PKI), where each node’s public key is globally available.

\emph{Hashing} and \emph{signing} are two basic cryptographic primitives used in BFT protocols. For a message \(m\), we denote its hash value as \(\text{hash}(m)\). With the availability of PKI, nodes can generate signatures for any message. We denote a message \(m\) accompanied by the signature of node \(i\) as \(\langle m \rangle_i\), and a message \(m\) carrying the quorum signature of all nodes in the set \(S\) as \(\langle m \rangle_S\).

The \emph{Merkle tree} for the vector \(y\) is constructed as a binary tree where the value of each node is the hash of the concatenation of its two children’s values. The value of the \(i\)-th leaf node is \(\text{hash}(i|y_i)\), where \(y_i\) is the \(i\)-th chunk of \(y\). We use \(H(y)\) to represent the root of the Merkle tree for the vector \(y\), and \(P(y_i)\) to denote the path from \(y_i\) to the root. Given \(H(y)\), if a node receives a chunk \(y_i\) along with its path \(P(y_i)\), it can verify whether \(y_i\) is indeed the \(i\)-th chunk of the vector \(y\) by recomputing the Merkle root using \(y_i\) and \(P(y_i)\), and then comparing it with \(H(y)\)~\cite{kocher1998certificate}. Therefore, \(P(y_i)\) is also referred to as the Merkle proof of \(y_i\).

\section{Hamster under Standard Synchrony Model}
\label{sec::Hamster}

In this section, we introduce Hamster, a fast BFT protocol that operates under standard synchrony. We demonstrate that this protocol can tolerate $f < \frac{n}{2}$ malicious nodes in an $n$ node system. Throughout this section, unless otherwise specified, we assume \(n = 2f + 1\), where a lower ratio of malicious nodes only enhance system security.

Hamster is a leader-based protocol. That is, each block to be committed upon is proposed by a leader node, and the other nodes are only responsible for deciding whether to commit the block. Similar to PBFT~\cite{PBFT} and Sync HotStuff~\cite{SyncHotstuff}, Hamster utilizes a stable leader, meaning the leader continues to broadcast new consensus content until evidence of malicious behaviours leads to their replacement.
This gives rise to the concept of a view, which represents the entire tenure of a leader, denoted by \(v\), and numbered by integers. We do not delve deeply into the selection of a new leader here, thus simply assume that all nodes have a determined index, from 1 to $n$, where the leader for view \(v\) is the node with index \((v \mod n) + 1\).

Hamster is organized into three phases. In the Steady-State phase, Hamster conducts the distribution of encoded data and consensus on the digest, employing an architecture akin to an undecoupled disperse-then-agree architecture. If, during the Steady-State phase, the leader node is detected engaging in malicious behavior, other nodes can present evidence publicly and initiate the View-Change phase to replace the leader securely. Finally, in the decoupled Follow Phase, a second distribution of encoding ensures that all honest nodes receive the complete consensus content, enabling the system to provide feedback on requests.

\subsection{Data Structures and Properties}
In conventional BFT protocols, requests are batched into \textit{blocks}, and consensus is reached on these blocks to form a \emph{chain} structure, expressed as \(B_h := (b_h, H(B_{h-1}))\). Here, \(h\) signifies the block \textit{height}, \(b_h\) represents the batched requests, and \(H(B_{h-1})\) serves as the \textit{identifier} for the preceding block. Specifically, the first block $B_1$, lacks a predecessor and is thus defined as $B_1 := (b_1,\perp)$.
This self-contained structure of blocks is particularly advantageous in BFT systems because committing to a block also implicitly commits to all its predecessors, due to the unique determination of the entire chain from any given block.
While blockchain systems typically employ \textit{hashes} as identifiers, this paper opts for the use of a \textit{Merkle tree root} instead. This adaptation aids in the design of the low-communication Follow phase. We use the notation \(H(\cdot)\) to refer to the Merkle root by default in the subsequent discussions.

Hamster employs a coding scheme where the block \(B_h\) is evenly divided into \(f+1\) chunks, which are then encoded into \(n\) chunks \((y_h^1, \ldots, y_h^n)\). Additionally, a Merkle tree is constructed with the root \(H(B_h)\).
Because of this coding scheme, the self-contained property of blocks is compromised.
Thus, we define another structure called a \emph{segment} for node \(r\) at height \(h\), denoted by \(Y_h^r\), as follows:
\[
    Y_h^r := ( r, y_h^r, I_h ),
\]
\[
     I_h := (H(B_{h}), H(I_{h-1})).
\]
Here, \(y_h^r\) represents the \(r\)-th chunk of \(B_h\). We also define \(I_h\) as the identifier block at height \(h\), which becomes the new self-contained structure and can be treated as a block. In this structure, \(H(B_h)\) serves as the content of the block, and \(H(I_{h-1})\) acts as the identifier of the predecessor. Consequently, committing to \(I_h\) implies a commitment to all its predecessors. By default, \(I_1\) is defined as \((H(B_1), \perp)\).

The relationships between blocks and segments are characterized as follows.

\begin{definition}[Extension]
Given two blocks or two identifier blocks \(A\) and \(B\), extension abides by these rules:
    
(i) \(A\) extends \(A\) itself. 

(ii) If \(A\) contains the identifier of \(B\), \(A\) extends \(B\), and \(B\) stands as \(A\)'s predecessor.
    
(iii) If \(A\) extends \(B\), \(A\) extends \(B\)'s predecessor.
    
\end{definition}
Note that the relationships of segments are not defined since they are characterized by identifier blocks.

For Hamster, if a node \(r\) agrees on \(I_h\), it should send a vote message to all other nodes, similar to the process in Sync HotStuff. We can thus define the certificate as follows.

\begin{definition}[Quorum Certificate]
If a node receives \(f+1\) different votes for one identifier block $I_h$ in view $v$  it can generate a certificate for \(I_h\) by packaging these votes together, denoted as \(C_v(I_h) = \langle I_h \rangle_{r_1, r_2, \ldots, r_{f+1}} \), where $r_1,\ldots, r_{f+1}$ are $f+1$ distinct nodes.
Also, \(I_h\) becomes a certified identifier block.
\end{definition}

\begin{definition}[Rank]
Certificate \(C_v(I_h)\)'s rank is a tuple \((v,h)\). 
For \(C_v(I_h)\) and \(C_{v'}(I_{h'})\), if \(v>v'\), then \(C_v(I_h)\) has a higher rank. If \(v=v'\), the certificate with higher height has a higher rank.
Certificate's rank is also the corresponding certified block's rank.
\end{definition}

\begin{figure*}
\begin{mdframed}
Let $v$ be the current view and $L$ be the leader of the current view. In view $v$, a node runs the following steps in the Steady-State.

\begin{enumerate}
\renewcommand{\labelenumi}{}
\item \label{s1} \textbf{Propose.} Upon obtaining  $C_v(I_{h-1})$, $L$ sends the proposal in the form of  $\langle \text{proposal},Y_{h}^{r},v,C_v(I_{h-1})\rangle_L$ to each node $r$.
\item \label{s2} \textbf{Re-Propose.} Every node forwards the first received proposal and its own proposal to all the other nodes.
\item \label{s3} \textbf{Vote.} Whenever $r$ receives $f+1$ segments, decode $B_h$. If $H(B_h)$ is identical with the content in $I_h$, broadcast a \emph{vote} of the form $\langle \text{vote}, I_h, v\rangle_r$.
\item \label{s4} \textbf{Pre-Commit.} Node $r$ set a $\text{commit-timer}$ to $2\Delta$ and starts to count down at the moment $r$ got a new proposal of height $h+1$. 
\item \label{s5} \textbf{Commit.} When $\text{commit-timer}$ reaches zero, if node $r$ is still in view $v$, commit on $I_h$.
\end{enumerate}
\end{mdframed}
\caption{The Steady-State protocol under standard synchrony.}
\label{fig:steady_sync}
\end{figure*}

\begin{figure*}[!ht]
\begin{mdframed}
Let $L$ and $L'$ be the leaders of view $v$ and $v'=v+1$, respectively. Each node $r$ runs the following steps.
\begin{enumerate}
\renewcommand{\labelenumi}{}
\item \label{v1} \textbf{Quit-View.} Upon finding an evidence $\mathcal{E}$, node $r$ broadcasts a quit-view of the form $\langle \text{quit-view},\mathcal{E},v\rangle_r$. When receiving the quit-view, $r$ forwards quit-view and leaves view $v$.
\item \label{v2} \textbf{Status.} Each node picks a highest certified block $C_{v'}(I_{h'})$, locks on it, sends it to the new leader $L'$, and enters view $v'$.
\item \label{v3} \textbf{New-View.}  $L'$  wait for $2\Delta$ before broadcasts $\langle \text{new-view},v',C_{v'}(I_{h'})\rangle_{L'}$ where $C_{v'}(I_{h'})$ is the highest certified block known to $L'$.
\item \label{v4} \textbf{View-Start.} Upon receiving a new-view, if $C_{v'}(I_{h'})$ has a rank equal to or higher than $r$'s locked block,  $r$ forwards the new-view to all the other nodes and broadcasts a vote of the form $\langle \text{vote},I_{h'},v'\rangle_r$. 
\end{enumerate}
\end{mdframed}
\caption{The View-Change protocol under standard synchrony.}
\label{fig:view-change}
\end{figure*}

\begin{figure*}
\begin{mdframed}
Each node $r$ runs the following steps.
\begin{enumerate}
\renewcommand{\labelenumi}{}
\item \label{f1} \textbf{Propose with Proof.} Nodes with block \(B_h\) coincides with committed \(I_h\), should re-generate segment $Y_h^r$  and Merkle proof $P^r_h(Y_h^r)$. Then, it should distribute \(\langle \text{follow}, Y^r_h, P^r_h(Y^r_h) \rangle\) to node \(r\).
\item \label{f2} \textbf{Repropose.} Upon receiving a follow message that contains \(Y_h^{r'}\), node \(r\) verifies it based on the Merkle proof and \(I_h\). If the message is correct and \(r' = r\), the node forwards it to all other nodes.
\item \label{f3} \textbf{Follow Commit.} When node \(r\) collects \(f+1\) correct follow messages, it decodes and commits the decoded result.
\end{enumerate}
\end{mdframed}
\caption{The Follow phase under standard synchrony.}
\label{fig:follow}
\end{figure*}

\subsection{Steady-State Phase}

The Steady-State phase operates iteratively until evidence of the leader's malicious behaviors is detected. As depicted in Fig.~\ref{fig:steady_sync}, the protocol progresses step by step. We will first provide an introduction to this phase, followed by a discussion on the intuition behind some of the technologies used within it, as detailed in Section~\ref{sec::discussion}.

Let $v$ be the current view and $L$ be the leader of the current view. In view $v$, a node runs the following steps.

\textbf{Propose.} 
Once the leader node \( L \) has observed the certificate \( C_v(I_{h-1}) \), \( L \) constructs a new proposal message for each node \( r \) based on \( b_h \). This message is tailored to the specific requirements of each node and is formatted as \( \langle \text{proposal}, Y_{h}^{r}, v, C_v(I_{h-1}) \rangle_L \). Each tailored message is then sent to its corresponding node.

After receiving the proposal message, other nodes are responsible for verifying the leader's signature and ensuring that the message is in a legal format. This verification is required for all subsequent message exchanges and will not be repeated for the sake of brevity. Also note that the first certificate in view $v$ is generated in the view change phase.

\textbf{Re-propose.} 
After receiving the first proposal message of height $h$, node $r$ should forward the proposal to all nodes. If the first proposal message isn't specifically tailored for \( r \), then \( r \) should also forward the message tailored for it .

Note that two segments may be transferred in this step, but only the first must adhere to the synchronous assumption. The critical aspect of this initial transfer is to ensure that the time difference between all nodes receiving the certificate does not exceed \(\Delta\). The second transfer, on the other hand, is designed to ensure that each node forwards its own chunk to assist other nodes in decoding.

\textbf{Vote.} 
Upon receipt of \(f+1\) proposals, each carrying a unique segment but with the same identifier \(I_h\), node \(r\) should attempt to decode \(B'_h\) and then verify if \(H(B'_h) = H(B_h)\). 
If the identifiers match, \(r\) is prompted to vote for the proposal. It does so by sending a voting message structured as \( \langle \text{vote}, I_h,v\rangle_r \) to all other nodes. 
However, if there's a discrepancy between the identifiers, the segments effectively serve as the evidence of faulty coding by the leader.

The subsequent steps of the Steady-State phase are designed to be non-blocking, allowing nodes to initiate the next round even while the current round is still in progress. This feature is crucial for Hamster, as depicted in Fig.~\ref{fig:fork}. The subsequent processes in Hamster will commence after the propose step of the next consensus round. 
The new proposal includes \(C_v(I_h)\), which serves as implicit evidence that honest nodes have successfully decoded the content. This evidence allows nodes that have not decoded the content themselves to commit to \(I_h\), thus ensuring the security of the Follow phase by enabling consensus on the digest even among nodes unable to decode on their own.

\begin{figure}
    \centering
    \includegraphics[width=1\linewidth]{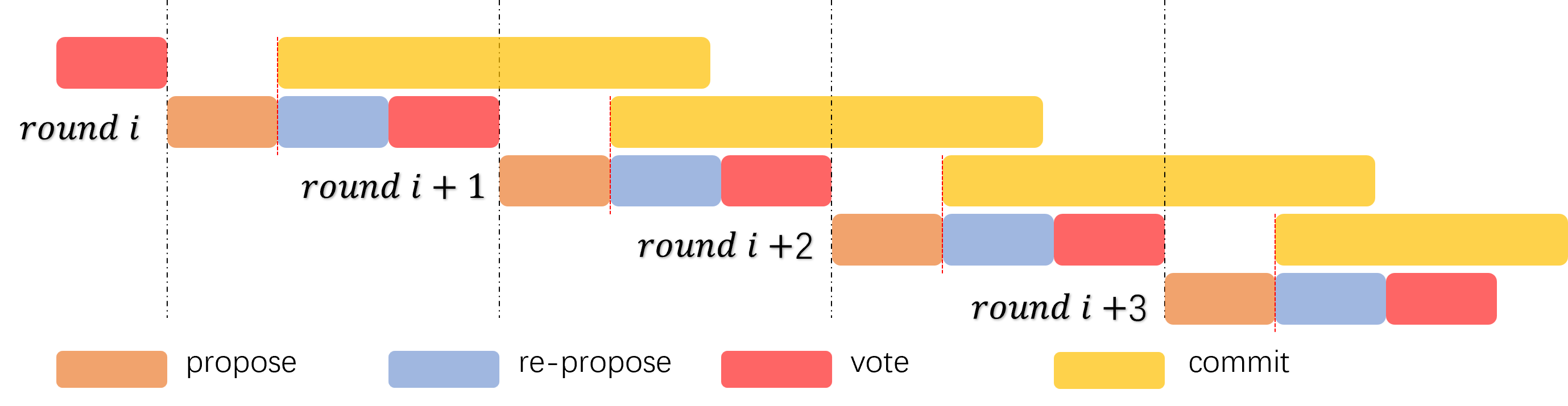}
    \caption{Fork Process, the commit will happen after the next round's propose}
    \label{fig:fork}
\end{figure}

\textbf{Pre-commit (non-blocking).} 
After the leader $L$ has observed the certificate \( C_v(I_h) \), \( L \) constructs a distinct proposal message, tailored for each node \( r \), in the format \( \langle \text{proposal}, Y_{h+1}^{r}, v, C_v(I_h) \rangle_L \). Each tailored message is then sent to its corresponding node \( r \). This is the proposal step for the new round.
When a node receives this new proposal, it enters the pre-commit step and countdown for a duration of \( 2\Delta \), no matter whether the node has voted or not. 

The purpose of this countdown is to give nodes enough time to detect equivocation and interact. The time difference between two nodes receiving the same proposal can be at most \( \Delta \). Because in the re-propose step, after one honest node gets a proposal message, all nodes will get a proposal message in $\Delta$ time, indicating that the time interval between different nodes starting to countdown will not be greater than $\Delta$.
Upon receiving messages, some nodes may observe evidence of equivocation, and they can send the evidence to all nodes, which will also take no more than \( \Delta \) time to reach. So, waiting for \( 2\Delta \) ensures that any equivocation is detected before committing.

\textbf{Commit (non-blocking).} When the countdown ends and if the node is still in this view, it commits on $I_h$. 
In the standard synchrony environment, this step doesn't involve any computations or communications. It's positioned here merely to align with subsequent sluggish scenarios.

\subsubsection{Further Discussion}
\label{sec::discussion}

In the Steady-State phase, nodes are compelled to commit, even if they cannot decode. This procedure is necessary due to the absence of proof that a sufficient number of chunks for successful decoding have not been received. If nodes refuse to commit because they cannot decode, they may neither commit nor initiate a view-change, thereby risking the system's liveness. This gives rise to the issue where some nodes might only have \(I_h\) and not \(B_h\), thus necessitating a second dispersal in the Follow phase. Without it, some honest nodes would lack the complete block content and would be unable to provide feedback, potentially allowing malicious nodes to dominate the system.

The rationale for conducting a second dispersal hinges on ensuring that at least one honest node can successfully decode. The certificate for the next height, \(C_v(I_h)\), which requires at least \(f+1\) votes based on successful decoding, ensures that at least one honest node has decoded successfully. This mechanism ties the content to the identifier block, enabling the Follow phase to proceed effectively.

This operational framework raises a question about the necessity of the first dispersal. If the leader were to simply broadcast the block, with nodes exchanging only the identifier and not forwarding the block content, nodes can still follow the content later on. However, this approach would overly burden the leader node, especially in a bandwidth-limited environment, thereby slowing the process as the leader becomes a system bottleneck. In contrast, employing a coding approach distributes the communication load evenly among all nodes.

The above concerns show that Hamster is not simple generalization of Sync HotStuff. Some specific schemes are specially designed for solving the gap between them, which will show up in the safety analysis in Section~\ref{sec:safety}.

\subsection{View-Change Phase}
The View-Change phase is illustrated in Fig.~\ref{fig:view-change}. 

We first introduce three types of evidence, each corresponding to a different type of malicious behavior exhibited by the leader.

\textbf{Equivocation}: This type refers to two proposals or new-view messages with the leader's signature that do not extend one another.

\textbf{Error}: This type refers to a scenario where the decoding result differs from the identifier contained by all segments engage in the decoding, as mentioned in the vote step.

\textbf{Silence}: If a node hasn't voted \( p \) times within \( (3p+4)\Delta \) duration in the view, it suggests that the leader might be selectively withholding messages. Even if nodes cannot present direct evidence, they can broadcast their perspective to all. Such a message is labeled as \(\langle \text{blame},v\rangle_r \). When a node receives \( f+1 \) uniquely signed blame messages, it indicates that at least one honest node has blamed, thus forming evidence. The choice of \( (3p+4)\Delta \) will be explained in the context of liveness.

During the Steady-State phase, each node will continue monitoring the status of these three types of evidence. As long as evidence is found, it will immediately switch to the View-Change phase.

Suppose \(L\) is the current leader for view \(v\) and is discovered to be malicious. Meanwhile, \(L'\) is the leader for the subsequent view \(v+1\). 

\textbf{Quit-view.}
When node \( r \) obtains an evidence \( \epsilon \), it should send the quit-view message \( \langle \text{quit-view}, \epsilon, v\rangle_r \) to all nodes then leaving view $v$.
Upon receiving a quit-view message, nodes should verify the evidence. If validated, nodes should forward the message to all others and then exit the view.

\textbf{Status.} 
After leaving view \( v \), node \( r \) should lock its highest-ranked certificate \( C_{v'}(I_{h'}) \) and send the message \( \langle \text{status}, C_{v'}(I_{h'}),v'\rangle_r \) to \( L' \), subsequently entering view \( v+1 \).

\textbf{New-view.} 
Upon entering the new view, \( L' \) should wait for \( 2\Delta \) time. After this waiting period, \( L' \) should select the highest-ranked certificate \( C_{v''}(I_{h''}) \) and send a new-view message \( \langle \text{new-view}, C_{v''}(I_{h''}), v+1\rangle_{L'} \) to all nodes. 

This wait is crucial, as \( L' \) needs to collect certificates from other nodes, a process that may take \( \Delta \) time. Additionally, since \( L' \) might enter the new view \( \Delta \) earlier than other nodes, the total waiting time should be at least \( 2\Delta \).

\textbf{View-start.} 
When node \( r \) receives the new-view message and finds that the contained certificate \( C_{v''}(I_{h''}) \) has a rank no lower than \( r \)'s locked certificate \( C_{v'}(I_{h'}) \), \( r \) should vote for \( C_{v''}(I_{h''}) \). This is done by sending the message \( \langle \text{vote}, I_{h''},v+1\rangle_r \) to all nodes. Once \( L' \) observes the certificate \( C_{v+1}(I_{h''}) \), it can begin making proposals.

\subsection{Weak Safety and Liveness}
As stated before, in the Steady-State phase, honest nodes commit on the identifier block rather than the entire block.
We characterize this property as \emph{Weak Safety}, i.e., honest nodes do not commit on different identifier blocks and at least one honest node has the content.


An identifier block \(I_h\) is considered to be \emph{directly} committed when the commitment occurs as a result of the commit-timer reaching \(0\). In contrast, \(I_h\) is deemed \emph{indirectly} committed if the commitment arises due to the commitment made on another block that extends \(I_h\). 

\begin{lemma}
\label{lemma:certificate}
If an honest node \( r \) directly commits \( I_h \) in view \( v \) at time $t$, then 

(i) At least one honest node successfully decoded \( B_h \) in view \( v \) corresponding to \(I_h\).

(ii) No nodes can generate an equivocating certificate in view \( v \).

(iii) Every honest node locks on a certificate ranked equal to or higher than \( I_h \) before entering view \( v+1 \).
\end{lemma}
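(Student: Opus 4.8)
Everything in the proof stays inside view $v$, and the engine is the $2\Delta$ delay built into the Pre-Commit/Commit rule. Let $P_{h+1}$ be the first height-$(h+1)$ proposal that $r$ receives. Since $r$ commits $I_h$ (rather than some sibling), $P_{h+1}$ extends $I_h$, hence carries $C_v(I_h)$; moreover $r$ receives $P_{h+1}$ at time $t-2\Delta$ and, by Re-Propose, relays it at once, so \emph{every honest node holds $C_v(I_h)$ by time $t-\Delta$}. Part (i) is then immediate: $C_v(I_h)$ bundles $f+1$ distinct votes on $I_h$ in view $v$, at least one from an honest node (at most $f$ are malicious), and by the Vote step an honest node emits such a vote only after collecting $f+1$ segments with identifier $I_h$, decoding a block $B$, and verifying $H(B)$ against the content field of $I_h$; collision resistance then gives $B=B_h$, so an honest node has decoded $B_h$.

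For (iii) I would first observe that no honest node can leave view $v$ before time $t-\Delta$: an honest node leaves only upon broadcasting or forwarding a quit-view message, and such a message would reach $r$ within $\Delta$ and, by the Quit-View step, pull $r$ out of view $v$ — impossible before $t$, since $r$ is in view $v$ at $t$. Hence any honest node runs the Status step at time $\geq t-\Delta$, by which moment it already holds $C_v(I_h)$; since Status makes a node lock on its highest-ranked certificate and $C_v(I_h)$ has rank $(v,h)$, the locked certificate has rank $\geq (v,h)$.

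Part (ii) is the heart of the matter; I would prove the relevant statement that no certified block in view $v$ conflicts with $I_h$. Assume toward a contradiction that some certified $A$ in view $v$ conflicts with $I_h$. The chains of $I_h$ and $A$ diverge at some height $h^*\leq h$: on $I_h$'s side a certified ancestor $\hat I_{h^*}$ (possibly $I_h$ itself), on $A$'s side a certified $A_{h^*}\neq\hat I_{h^*}$, so $L$ signed height-$h^*$ proposals for two incompatible identifiers, an equivocation. Walking down $I_h$'s certificate chain — each proposal embeds the preceding height's certificate, which cannot be forged — shows $C_v(\hat I_{h^*})$ was assembled, hence has at least one honest voter, before $t-2\Delta$. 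The crucial sub-claim is: an honest node that casts a vote for identifier $X$ at height $h^*$ while still in view $v$ must have relayed an $X$-proposal in Re-Propose — indeed its first-received height-$h^*$ proposal must itself have identifier $X$, since otherwise at the moment it accumulates the $f+1$ $X$-segments needed to vote it already holds two incompatible height-$h^*$ proposals, i.e. equivocation evidence, and an honest node would have quit rather than voted; and Re-Propose makes it relay that first proposal. Applying the sub-claim to the honest voter for $\hat I_{h^*}$: an $\hat I_{h^*}$-proposal was broadcast before $t-2\Delta$, so $r$ holds one before $t-\Delta$. Then any honest voter for $A_{h^*}$ must have voted before it received an $\hat I_{h^*}$-proposal (else it would see the equivocation and quit), hence before $t-\Delta$, and again by the sub-claim it broadcast an $A_{h^*}$-proposal before $t-\Delta$, so $r$ holds one before $t$. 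Thus by time $t$ node $r$ holds two incompatible height-$h^*$ proposals, detects the equivocation, and leaves view $v$ — contradicting its commit at $t$.

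The obstacle is concentrated in part (ii): making the $\Delta$-bookkeeping compose so that both halves of the equivocation land at $r$ inside the $2\Delta$ commit window. The two ingredients that must be stated precisely are the certificate-chain pigeonhole placing $C_v(\hat I_{h^*})$ before $t-2\Delta$, and the sub-claim that a surviving honest voter re-broadcasts the proposal it votes for — without the latter the $\Delta$-propagation bounds do not compose. The remaining work is routine once these are in place: the reduction from an arbitrary pair of conflicting certificates (modulo forks strictly above $I_h$, which the lemma neither claims nor needs to exclude), the variant where divergence arises from inconsistent new-view messages (handled via relaying in View-Start), and simultaneous-arrival edge cases.
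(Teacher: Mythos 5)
Your proof is correct and follows essentially the same route as the paper's: part (i) extracts an honest voter from $C_v(I_h)$, part (iii) combines the $\Delta$-bounded view-exit with the relayed height-$(h+1)$ proposal carrying $C_v(I_h)$, and part (ii) derives a contradiction by showing both halves of the equivocation would reach $r$ inside the $2\Delta$ commit window. Your part (ii) is in fact more careful than the paper's version --- the divergence-height reduction and the sub-claim that a surviving honest voter must have relayed (as its first-received proposal) the very proposal it votes for make explicit two propagation steps that the paper passes over with a bare ``similarly.''
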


\begin{proof}
 For proposition (i), given that node \( r \) directly commits \( I_h \) in view \( v \), it must have received a segment \( \langle \text{proposal}, Y_{h+1}^{r^*}, v, C_v(I_h) \rangle_L \) tailored to an arbitrary node $r^*$. This implies that \( H(B_h) \) received at least \( f+1 \) votes. Among these votes, at least one must come from an honest node, denoted as \( r' \). This means \( r' \) has successfully decoded \( B_h \).

For proposition (ii), assume there exists an equivocating (identifier) block \( I' \) certified in view \( v \). This implies that \( I' \) garnered at least \( f+1 \) votes, of which at least one has to be from an honest node. Let this honest node be denoted as \( r' \). Let's consider the time when \( r' \) voted for \( I' \).
Since \( r \) commits \( I_h \) at time \( t \), it must have obtained \( \langle \text{proposal}, Y_{h+1}^{r^*}, v, C_v(I_h) \rangle_L \) by the time \( t-2\Delta \). This means \( r' \) should also receive this proposal message no later than \( t-\Delta \). Therefore, \( r' \) cannot cast a vote for \( I' \) after \( t-\Delta \).
If \( r' \) voted for \( I' \) before \( t-\Delta \), this suggests that \( r' \) received a proposal containing the identifier \( I' \) before \( t-\Delta \). Similarly, \( r \) would receive this proposal message before \( t \) and thus would not commit.

For proposition (iii), given that node \(r\) has received \( \langle \text{proposal}, Y_{h+1}^{r^*}, v, C_v(I_h) \rangle_L \) by \( t-2\Delta \) for an arbitrary node \( r^*\), it is confined to view \( v \) during the interval \( [t-2\Delta, t] \).
Moreover, another honest node \( r' \) is guaranteed to be in view \( v \) at \( t-\Delta \). This is because \( r' \) can only enter the view a duration \( \Delta \) after node \( r \) and can quit the view a duration \( \Delta \) before node \( r \).
By this juncture, the proposal message sent by \( r \) would have been reach to \( r' \). Consequently, \( r' \) would lock onto a certified block having a rank at least as high as \( C_v(I_h) \) before entering view \( v+1 \).
\end{proof}

\begin{lemma}
\label{lemma:extend}
If an honest node commits $I_h$ directly in view $v$, then any certified block that ranks equal to or higher than $C_v(I_h)$ must extend $I_h$. 
\end{lemma}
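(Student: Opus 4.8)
The plan is to prove Lemma~\ref{lemma:extend} by induction on the rank of the certified block, leveraging Lemma~\ref{lemma:certificate} as the main tool. Suppose an honest node directly commits $I_h$ in view $v$, and let $C_{v'}(I_{h'})$ be any certified block whose rank is equal to or higher than $(v,h)$. I want to show $I_{h'}$ extends $I_h$. I would induct on the rank $(v',h')$, ordered lexicographically as in Definition~\ref{lemma:certificate}'s surrounding text (the Rank definition). The base case is $(v',h') = (v,h)$: here $I_{h'} = I_h$ because Lemma~\ref{lemma:certificate}(ii) says no equivocating certificate exists in view $v$, so the only certified block of rank $(v,h)$ is $I_h$ itself, and $I_h$ extends itself by rule (i) of the Extension definition.

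For the inductive step I would split into two cases according to whether $v' = v$ or $v' > v$. If $v' = v$ and $h' > h$: a certificate $C_v(I_{h'})$ requires $f+1$ votes, at least one from an honest node $r'$; that honest node only votes in the Vote step after receiving a proposal $\langle \text{proposal}, Y_{h'}^{r'}, v, C_v(I_{h'-1})\rangle_L$, so the certificate $C_v(I_{h'-1})$ exists and has rank $(v, h'-1)$, which is still $\geq (v,h)$ since $h'-1 \geq h$. By the induction hypothesis $I_{h'-1}$ extends $I_h$, and since $I_{h'}$ contains the identifier of $I_{h'-1}$, rule (ii) and (iii) of Extension give that $I_{h'}$ extends $I_h$. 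The case $v' > v$ is the substantive one: I would argue that any certificate formed in a view later than $v$ must be built on top of a certificate of rank $\geq (v,h)$. Concretely, the first certified block in view $v'$ arises in the View-Change phase from a new-view message carrying the highest-ranked certificate $L'$ collected; by Lemma~\ref{lemma:certificate}(iii), every honest node locks a certificate of rank $\geq C_v(I_h)$ before entering view $v+1$, and an honest node only votes for a new-view proposal (View-start step) if the contained certificate ranks no lower than its own lock — so the certificate $C_{v+1}(\cdot)$ that forms must itself have rank $\geq (v,h)$, and by induction the block it certifies extends $I_h$. Then within view $v'$ the chain extends forward exactly as in the $v'=v$ case, and across intermediate views one chains these lock-monotonicity arguments together. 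The cleanest packaging is probably a sub-induction or a separate observation that the "highest certified block known to the system" is monotone non-decreasing in rank once it reaches $(v,h)$, and always extends $I_h$.

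The main obstacle I anticipate is handling the view-change boundary rigorously: I must rule out the possibility that some view $v'' $ with $v < v'' \le v'$ produces a certified block that does \emph{not} extend $I_h$ yet has rank $\geq (v,h)$. This requires carefully combining (a) Lemma~\ref{lemma:certificate}(iii) (every honest node's lock is $\geq C_v(I_h)$ on entering $v+1$), (b) the fact that locks only increase in rank across subsequent views (each View-start vote is gated on the new-view certificate out-ranking the current lock, and honest nodes re-lock the highest certificate they have seen in Status), and (c) the fact that any certificate needs at least one honest vote, so a certificate of rank $\geq (v,h)$ in view $v''$ forces at least one honest node to have accepted a new-view certificate of rank $\geq (v,h)$ — which by the induction hypothesis extends $I_h$. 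I would need to confirm that the protocol text indeed forces honest nodes to carry forward a lock of rank $\geq (v,h)$ through every intermediate view (not just into $v+1$), which should follow by iterating Lemma~\ref{lemma:certificate}(iii)-style reasoning, but the monotonicity of locks across many views is the delicate bookkeeping step. A secondary subtlety is making sure the Extension relation's three rules actually give transitivity of "extends" in the direction I need (if $A$ extends $B$ and $B$ extends $C$ then $A$ extends $C$), which I would either read off directly from rule (iii) applied inductively along the predecessor chain or state as a small preliminary observation.
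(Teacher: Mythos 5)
Your proof is correct and essentially the same as the paper's: the paper argues via a minimal counterexample in the set of certified blocks of rank at least $(v,h)$ that fail to extend $I_h$, which is exactly the contrapositive packaging of your strong induction on rank, and its case split ($v'=v$ handled by Lemma~\ref{lemma:certificate}(ii); $v'>v$ handled by the new-view/proposal dichotomy, Lemma~\ref{lemma:certificate}(iii), and the View-start rank gate) matches yours step for step. The lock-monotonicity-across-views and extension-transitivity subtleties you flag are real but are left equally implicit in the paper's own proof.
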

\begin{proof}
Let $S$ be the set of certified blocks that rank equal to or higher than $C_v(I_h)$ but do not extend $I_h$. Suppose for contradiction that $S \neq \emptyset$. Let $C_{v'}(I_{h'})$ be the lowest ranked block in $S$.

If $v'=v$, then in view $v$, $I_{h'}$ and $I_h$ are equivocating, which conflicts with Lemma~\ref{lemma:certificate} (ii).

Otherwise, if $v'>v$, some honest node must vote for $I_{h'}$ in view $v'$ upon receiving either a new-view or a proposal.

If it is a new-view of the form $\langle \text{new-view}, v', C_{v''}(I_{h'})\rangle_L$, according to Lemma~\ref{lemma:certificate} $(iii)$, at the end of view $v''$ all nodes are locked on $C_{v''}(I_{h'})$ that ranks equal to or higher than $C_v(I_h)$. But since $v''<v'$, $C_{v''}(I_{h'})$ has a lower rank than $C_{v'}(I_{h'})$, contradicting that $C_{v'}(I_{h'})$ is the lowest ranked certified block in $S$.

If it is a proposal $\langle \text{proposal}, Y^{r}_{h'},I(I_{h'}), v', C_{v'}(I_{h'-1})\rangle_L$, then the certified block $C_{v'}(I_{h'-1})$ is also in $S$, contradicting that $C_{v'}(I_{h'})$ is the lowest ranked block of $S$.
\end{proof}

\begin{theorem}[Weak Safety]
\label{theorem:weak_safety}
Two honest nodes will not commit on different identifier blocks at the same height, and at least one honest node has the corresponding complete block content. 
\end{theorem}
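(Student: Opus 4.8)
The plan is to split the statement into its two conjuncts and handle them separately, leaning on the three lemmas already established. The ``at least one honest node has the content'' part is essentially immediate: if an honest node commits $I_h$ at some height, the commitment is either direct or indirect. If direct, Lemma~\ref{lemma:certificate}(i) gives an honest node that decoded $B_h$ in view $v$. If indirect, the commitment comes from some directly committed block $I_{h''}$ that extends $I_h$; by Lemma~\ref{lemma:certificate}(i) an honest node decoded $B_{h''}$, and since the extension relation chains identifier blocks together (each $I_{h''}$ contains $H(I_{h''-1})$, hence $H(B_j)$ for all $j\le h''$ along the chain), decoding $B_{h''}$ together with the committed chain determines $B_h$ as well --- so some honest node holds the content corresponding to $I_h$.

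The main work is the ``no two honest nodes commit different identifier blocks at the same height'' part. First I would reduce to the case of directly committed blocks: if $I_h$ and $\tilde I_{h}$ are both committed at height $h$ by honest nodes, trace each back to a directly committed block --- $I_h$ is extended by a directly committed $C_v(I_{h_1})$ and $\tilde I_h$ by a directly committed $C_{\tilde v}(\tilde I_{h_2})$ --- and assume WLOG that $C_v(I_{h_1})$ has rank no higher than $C_{\tilde v}(\tilde I_{h_2})$. Then by Lemma~\ref{lemma:extend} applied to the direct commitment of $I_{h_1}$, the higher-ranked certified block $C_{\tilde v}(\tilde I_{h_2})$ must extend $I_{h_1}$. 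Now both $\tilde I_{h_2}$ and $I_{h_1}$ extend down to height $h$: $\tilde I_{h_2}$ extends $\tilde I_h$ and also extends $I_{h_1}$ which extends $I_h$. Since a block's chain is uniquely determined by extension (the identifier at each height pins down the unique predecessor at the next height down), the height-$h$ ancestor of $\tilde I_{h_2}$ is unique, forcing $I_h = \tilde I_h$.

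The subtle obstacle is making the ``reduce to direct commits, then compare ranks and apply Lemma~\ref{lemma:extend}'' argument airtight when the two commits happen in different views, because Lemma~\ref{lemma:extend} only speaks about certified blocks ranked at or above a single directly committed block. The resolution is exactly the WLOG rank comparison above: whichever directly committed block has the lower rank, the other directly committed block's certificate outranks it and hence (by Lemma~\ref{lemma:extend}) extends it, which is enough to force agreement at height $h$. A secondary point to state carefully is the uniqueness-of-chain fact --- that if $A$ extends $B$ and $A$ extends $B'$ with $B,B'$ at the same height then $B=B'$ --- which follows by downward induction from Definition~1(ii)--(iii) and the collision-resistance of $H$; I would isolate this as a one-line observation before the main argument so both conjuncts can invoke it.
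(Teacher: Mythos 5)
Your argument for the first conjunct is correct and is essentially the paper's own proof: reduce to the two directly committed extensions, compare their certificate ranks, apply Lemma~\ref{lemma:extend} so that the higher-ranked one extends the lower-ranked one, and conclude by uniqueness of the extension chain (via collision resistance of $H$). Your WLOG on rank is in fact stated more carefully than the paper's ``$v\le v'$ and $l\le l'$'', and isolating the uniqueness-of-ancestor observation is a reasonable addition.

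The gap is in the second conjunct, in the indirect case. From Lemma~\ref{lemma:certificate}(i) you obtain an honest node that decoded $B_{h''}$ for the \emph{directly} committed descendant $I_{h''}$, and you then claim that ``decoding $B_{h''}$ together with the committed chain determines $B_h$.'' The chain of identifier blocks only pins down the hashes $H(B_j)$ for $j\le h''$ (and $B_{h''}$ itself contains only $H(B_{h''-1})$, not $B_{h''-1}$); it uniquely \emph{determines} $B_h$ under collision resistance but does not put $B_h$ in any honest node's possession, which is what the theorem asserts and what the Follow phase needs. The correct route is different: every identifier block on a committed chain is certified (a proposal for height $j+1$ carries $C_v(I_j)$, and a view change restarts from a certified block, so certification propagates down the chain), every certificate contains $f+1$ votes and hence at least one honest vote, and an honest node votes for $I_j$ only after successfully decoding $B_j$ and checking $H(B_j)$ against $I_j$. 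This is the mechanism the paper's own (admittedly terse) remark about the silence/vote-counting rule is pointing at, and it is what you need to cover indirectly committed heights.
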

\begin{proof}
    Assume for contradiction that two distinct identifier blocks $I_{h}$ and $I'_{h}$ are committed at height $h$ by two honest nodes. Suppose that $I_{l}$ extends $I_{h}$ and is directly committed in view $v$, and $I'_{l'}$ extends $I'_{h}$ and is directly committed in view $v'$. 
    Without loss of generality, assume $v \leq v'$ and $l \le l'$. By Lemma~\ref{lemma:extend}, $I'_{l'}$ extends $I_{l}$. Thus, $I'_{l'}$ extends both $I_{h}$ and $I'_{h}$. As a consequence, $I_{h} = I'_{h}$.

    To avoid triggering a View-Change due to silence evidence, a malicious leader must also ensure that at least one honest node can successfully complete \(p\) votes within a time frame of \((3p+4)\Delta\).
\end{proof}

\begin{theorem}[Liveness]
\label{theorem:liveness}
All honest nodes continually commit on identifier blocks, ensuring that each request is ultimately committed.
\end{theorem}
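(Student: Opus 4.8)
The plan is to show that the leader of a view is replaced only if it is malicious, and that an honest leader drives progress, so that eventually an honest leader presides over a view that never ends and commits unboundedly many identifier blocks; since each committed identifier block $I_h$ implicitly commits all its predecessors and each request sits in some block, every request is eventually committed. I would organize the argument around three claims, proved in order.

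\textbf{Step 1: An honest leader is never blamed out of its view by silence, and no equivocation/error evidence exists against an honest leader.} First I would argue that an honest leader never produces equivocating proposals or new-view messages, and never produces a faulty encoding, so the Equivocation and Error evidence types can never be validly formed against it. The substance is the Silence bound: I would show that, starting from the moment all honest nodes are in view $v$ with an honest leader $L$ (which happens within $O(\Delta)$ of the view change by the quit-view/status/new-view timing, using that honest nodes enter within $\Delta$ of each other and $L'$ waits $2\Delta$), $L$ obtains $C_v(I_{h-1})$, sends tailored proposals, every honest node re-proposes within $\Delta$, so within $2\Delta$ every honest node has $f+1$ segments, decodes $B_h$, checks $H(B_h)$, and votes. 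Thus each consensus round of an honest leader takes at most a constant number of $\Delta$'s (I would pin down the constant — roughly $3\Delta$ per vote plus the $2\Delta$ startup — to match the $(3p+4)\Delta$ threshold), so an honest node completes its $p$-th vote well within $(3p+4)\Delta$ and never issues a blame. Hence no set of $f+1$ validly signed blame messages including an honest node's blame ever forms, so no honest leader's view is ever abandoned.

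\textbf{Step 2: A malicious leader either makes progress or is replaced.} Here I would show that in a view with a malicious leader, either honest nodes keep committing new identifier blocks, or within $(3p+4)\Delta$ some honest node fails to reach its $p$-th vote and broadcasts a blame; after $f+1$ such blames (at least one honest) every honest node forms silence evidence, quits the view, and via the View-Change phase all honest nodes enter view $v+1$ within $O(\Delta)$ of each other, carrying their highest-ranked certificates so the new leader can extend the committed prefix (invoking Lemma~\ref{lemma:certificate}(iii) and Lemma~\ref{lemma:extend} to guarantee the locked certificates are consistent with everything committed). Because leadership rotates as $(v \bmod n)+1$, after at most $n$ view changes an honest node becomes leader.

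\textbf{Step 3: Eventually an honest leader's view persists and commits unboundedly.} Combining Steps 1 and 2: once an honest leader takes over, by Step 1 its view is never exited, and by the progress analysis of Step 1 it commits a new identifier block every $O(\Delta)$; since $I_h$ extends all $I_{h'}$ with $h' < h$ and committing $I_h$ commits all of them (by the Extension definition and the indirect-commit rule), every height — hence every batched request — is eventually committed by every honest node. Finally, the Follow phase, whose safety rests on Lemma~\ref{lemma:certificate}(i) (at least one honest node decoded $B_h$), guarantees every honest node actually obtains the block content for each committed $I_h$ and can serve the request.

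\textbf{Main obstacle.} The delicate part is Step 1's timing accounting: I must show the per-round cost of an honest leader is small enough that the $p$-th vote always lands before $(3p+4)\Delta$, which requires carefully bracketing the view-entry skew ($\Delta$), the $2\Delta$ new-view wait, the re-propose propagation ($\Delta$), and the decode-and-vote latency, and checking that these sum to at most $3\Delta$ of amortized cost per vote with a $+4\Delta$ slack — and simultaneously confirming that this same bound is loose enough that a stalling malicious leader is provably detected. Getting the constant to close on both sides is where the real work lies; the rest is a standard rotate-until-honest-leader argument.
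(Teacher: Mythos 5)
Your proposal is correct and follows essentially the same route as the paper's proof: an honest leader leaves no equivocation or error evidence, and the timing budget (at most $4\Delta$ from view entry to the first vote opportunity, plus at most $3\Delta$ per consensus round) keeps every honest node's $p$-th vote within the $(3p+4)\Delta$ silence threshold, so an honest leader is never replaced and commits indefinitely. You are in fact more explicit than the paper about the complementary direction (a stalling malicious leader is blamed and leadership rotates to an honest node within $n$ views), which the paper leaves implicit; just note that the startup slack in the paper's accounting is $4\Delta$, not the $2\Delta$ you quote at one point, though you correctly enumerate all four $\Delta$-components in your closing remark.
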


\begin{proof}

If an honest node becomes a leader, it can continually push consensus for the following reasons:

\textbf{No Evidence}: Given that malicious nodes cannot forge signatures, an honest leader does not leave any malicious evidence (other than silence) for honest nodes.

\textbf{Sufficient Time}: The time from a non-leader node \(r\) entering a new view to the time the leader's new view message arrives at \(r\) is no more than \(4\Delta\). Assuming \(r\) enters the new view at time 0, the leader enters the view at the latest by time \(\Delta\). After waiting for \(2\Delta\), the leader sends the new view message, which will arrive at \(r\) by time \(4\Delta\) at the latest. Thereafter, \(r\) can begin the first vote and proceed with the standard consensus. Within the Steady-State phase, the entire loop, from the leader receiving a certificate to having sufficient votes, doesn't exceed \(3\Delta\), even if two messages might be sent in the re-proposal step.

Thus, within a time duration of \((3p+4)\Delta\), an honest leader can indeed facilitate \(p\) votes for every honest node. Nodes casting these votes will also have the complete block content.
\end{proof}

\subsection{Follow Phase}
\label{sec::follow}
As is emphasized in Theorem~\ref{theorem:weak_safety}, the Steady-State phase and the View-Change phase guarantee the weak safety, i.e., the identifier is the same, but cannot guarantee that the system provide consistent feedback according to major rule. Therefore we the Follow phase, shown in Fig.~\ref{fig:follow}  to distribute the content to all honest nodes. 

\textbf{Propose with Proof.}
For node \(r\), if the node successfully decodes at height \(h\), it checks whether the decoded result \(B_h\) is the content corresponding to \(I_h\). If it is, it re-encodes \(B_h\) and constructs the corresponding Merkle proof, distributes \(\langle \text{follow}, Y^{r'}_h, P^{r'}_h(y^{r'}_h) \rangle\) to node \( r' \).

\textbf{Repropose.}
For node \(r\), upon receiving a proposal with proof, it verifies whether \(Y_h^r\) is the \(r\)-th segment of \(B_h\). If it is, the node then forwards it to all other nodes. For messages that do not belong to node \(r\), it performs the same verification but does not forward them.

\textbf{Follow Commit.}
When \(f+1 \) distinct verified follow messages are collected, decode and commit the decoding result as the block content.

The Follow phase does not trigger a View-Change phase nor does it require synchronization among nodes, allowing it to operate correctly in partially-synchronous setting. In such settings, it is only necessary to ensure that messages are not lost; there is no need to know the exact upper bound on message arrival times.

\subsubsection{Coding and Merkle Proof}
In the Follow phase, one straightforward method would involve requiring all nodes that have successfully decoded to send their blocks to each other. This approach would undoubtedly succeed because the Steady-State phase ensures that at least one honest node has decoded the block. Other honest nodes could then verify whether the block is consistent with the identifier. However, this method incurs \(O(n^2)\) communication, negating the communication efficiency achieved in the Steady-State phase.


Additionally, simply repeating the dispersal process as the Steady-State phase is inadvisable. Considering a scenario where hashes are used, honest nodes can re-encode and distribute the correct chunks, while malicious nodes may distribute erroneous chunks. When a node receives a chunk, it cannot determine whether the chunk is correct, turning decoding into an error correction process. The errors created by \(f\) malicious nodes exceed the error-correcting capability of any \((n,k)\) code for arbitrary $k\geq 2$, leading to unsuccessful decoding.

The use of Merkle proofs addresses this issue. Since the validity of each chunk can be verified, decoding transitions to an erasure correction process. Thus, the $f$ erasures can be easily managed by an \((n,f+1)\) erasure code.

\subsection{Safety}
\label{sec:safety}
In order to prove the safety, we need to prove a property, called \emph{consistent distributing}, as follows.
\begin{lemma}[Consistent Distributing]
As long as one honest node distributes coded segment generated by \(B_h\),  then all honest nodes can always decode and obtain \(B_h\).
\end{lemma}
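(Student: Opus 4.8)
The plan is to follow a single re-encoded chunk from the distributing honest node out to every other honest node, and then convert ``enough chunks arrive'' into ``exactly $B_h$ is recovered'' using the MDS property together with Merkle-proof verification. Let $r^{\ast}$ be the honest node whose distribution is assumed; by the Propose-with-Proof step this means $r^{\ast}$ holds a decoded block $B_h$ whose Merkle root equals the value $H(B_h)$ recorded inside $I_h$, re-encodes it into the $n$ chunks $(y_h^1,\dots,y_h^n)$, and sends to each node $r'$ the segment $Y_h^{r'}=(r',y_h^{r'},I_h)$ bundled with the Merkle proof $P^{r'}_h(y_h^{r'})$.

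First I would show that every honest node eventually holds at least $f+1$ verified chunks. Each honest $r'$ receives $Y_h^{r'}$ from $r^{\ast}$; the chunk is genuine (as $r^{\ast}$ is honest), so its Merkle proof recomputes to $H(B_h)\in I_h$, the Repropose-step check passes, and $r'$ forwards $Y_h^{r'}$ to all nodes. Since $n=2f+1$ leaves at least $f+1$ honest nodes, any honest node then collects the forwarded chunk $Y_h^{r'}$ for every honest $r'$ — that is, at least $f+1$ pairwise-distinct follow messages that pass verification. Malicious nodes may withhold forwards or inject malformed chunks, but a malformed chunk would require a valid Merkle path to $H(B_h)$ for a value other than the true chunk at that index, which collision resistance of the hash forbids; so only genuine chunks of $B_h$ are ever admitted.

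Next I would finish the decoding argument. Because every admitted chunk is a genuine chunk of the unique block with Merkle root $H(B_h)$, the decoder sees no corrupted symbols, only missing positions — pure erasures. With at least $f+1$ correct chunks among $n$, at most $n-(f+1)=f$ positions are erased, which the $(n,f+1)$ MDS (Reed--Solomon) code corrects; the recovered block re-encodes to the received chunks and has Merkle root $H(B_h)$, hence equals $B_h$ (uniqueness following from collision resistance and from all honest nodes agreeing on $I_h$ via Theorem~\ref{theorem:weak_safety}). Thus every honest node reaches Follow-Commit and outputs $B_h$. Timing is not an issue: the Follow phase assumes no synchronous bound, only eventual delivery between honest nodes, so the chain ``$r^{\ast}$ sends $\to$ honest $r'$ verifies and forwards $\to$ honest $r''$ gathers $f+1$'' completes in finite time for each honest $r''$.

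The crux — and the only delicate point — is exactly the observation highlighted in Section~\ref{sec::follow}: Merkle-proof verification downgrades a hopeless error-correction problem (unsolvable against $f$ adversarial chunks for any code dimension $k\ge 2$) to a solvable erasure-correction problem, while the fact that the $f+1$ honest nodes each re-contribute precisely their own indexed chunk guarantees $f+1$ \emph{distinct} usable chunks rather than $f+1$ duplicates. Everything else is routine verification and counting.
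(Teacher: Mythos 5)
Your proposal is correct and follows essentially the same route as the paper's (much terser) proof: the honest distributor's chunks pass Merkle verification at every honest node, each of the $f+1$ honest nodes forwards its own distinct chunk, and the $(n,f+1)$ MDS code then recovers $B_h$ by pure erasure correction since no forged chunk can carry a valid proof against $H(B_h)$. Your version merely fills in the details the paper leaves implicit (collision resistance, the erasure-vs-error distinction, and the final identification of the decoded block with $B_h$), so no further comparison is needed.
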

\begin{proof}
    As long as one honest node broadcasts coded segments of \(B_h\), these segments will pass verification by honest nodes and be forwarded. At least \(f+1\) honest nodes forward \(f+1\) different chunks, hence decoding is successful.
\end{proof}

\begin{theorem}[Safety]
\label{theorem:safety}
    No two honest nodes commit different blocks at the same height.
\end{theorem}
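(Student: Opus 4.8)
The plan is to reduce the full Safety statement to the already-established Weak Safety (Theorem~\ref{theorem:weak_safety}) together with the Consistent Distributing lemma and the correctness of the Follow phase. By Weak Safety, whenever two honest nodes commit at height $h$ — whether in the Steady-State phase or later in the Follow phase — they commit on the \emph{same} identifier block $I_h$, and at least one honest node possesses the complete block $B_h$ whose Merkle root matches $H(B_h)$ stored in $I_h$. So the only thing left to show is that any honest node that outputs a block at height $h$ (via the Follow phase) outputs exactly this $B_h$, i.e. the block content is uniquely pinned down by $I_h$.

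First I would argue uniqueness of the content given $I_h$: since $I_h$ contains the Merkle root $H(B_h)$, and the Merkle tree is collision-resistant under the hash assumption, no computationally bounded adversary can produce a distinct $B_h' \neq B_h$ with $H(B_h') = H(B_h)$; hence ``the block matching $I_h$'' is well defined. Next I would invoke Consistent Distributing: by Weak Safety at least one honest node holds the genuine $B_h$, and in the Follow phase (Fig.~\ref{fig:follow}) that node re-encodes $B_h$ into $n$ chunks with Merkle proofs against $H(B_h)$ and disperses them; by the Consistent Distributing lemma every honest node then collects $f+1$ verified chunks and decodes $B_h$ correctly. The Merkle-proof check in the Repropose step is what guarantees every accepted chunk is a genuine chunk of $B_h$, so the $f$ possibly-bad chunks injected by malicious nodes are rejected rather than causing a decoding error — this is exactly the erasure-vs-error distinction emphasized in Section~\ref{sec::follow}. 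Therefore the Follow-Commit step makes every honest node output precisely $B_h$.

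Finally I would assemble the cases. If both honest nodes commit a block at height $h$ in the Steady-State phase, they actually only commit $I_h$ there, and Weak Safety already forces agreement on $I_h$; the content they eventually output comes through the Follow phase, handled above. If one or both commit via the Follow phase, the preceding paragraph shows each outputs the unique $B_h$ determined by the common $I_h$ from Weak Safety. In every case the two honest nodes output the same block at height $h$, establishing Safety.

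The main obstacle I anticipate is making the interface between the two phases airtight: I must be sure that \emph{whatever} an honest node commits ``as a block at height $h$'' is the Follow-phase output (Steady-State alone never yields block content), and that Weak Safety's guarantee — ``at least one honest node has the content'' — is precisely the hypothesis needed to trigger Consistent Distributing in the Follow phase. A secondary subtlety is confirming that the honest node holding $B_h$ actually reaches the Follow phase for height $h$ (it does, since committing $I_h$ in Steady-State is what authorizes running Follow for $h$), so that Consistent Distributing's premise ``one honest node distributes coded segments generated by $B_h$'' is genuinely met rather than merely assumed.
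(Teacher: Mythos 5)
Your proposal is correct and follows essentially the same route as the paper's own proof: Weak Safety (Theorem~\ref{theorem:weak_safety}) pins down a unique $I_h$ and guarantees one honest node holds $B_h$, and the Consistent Distributing lemma plus the Merkle-proof checks in the Follow phase ensure every honest node decodes that same $B_h$. Your version merely spells out details the paper leaves implicit (hash collision resistance, the erasure-versus-error distinction, and the phase interface).
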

\begin{proof}
    Weak safety ensures that all nodes do not commit different \(I_h\).
    The consistent distributing lemma guarantees that only segments corresponding to \(I_h\) are accepted and can be decoded.
    Therefore, all nodes will only follow the same \(B_h\) corresponding to \(I_h\).
\end{proof}

The absence of the synchronous bound \(\Delta\) in the proof further demonstrates that the Follow phase can operate under semi-synchronous conditions.

\subsection{Performance Analysis}
\label{sec:analysis}

Throughput is the key metric for assessing a BFT protocol. Our analysis primarily covers the blocking parts of the Steady-State phase, include the Propose, Re-propose, and Vote steps, while Sync HotStuff includes the Propose and Vote steps. The Follow phase, as it contains some communication tasks, is also considered in this section.

Breaking down the actual delay of any communication process, we identify three critical components: the \textit{propagation delay}, which measures the time taken by a signal to traverse its medium; the \textit{transmission delay}, denoting the time span for a message's reception; and the \textit{computation delay}, which encapsulates the time costs of computational tasks. We represent the average propagation delay as $\delta_p$, and designate each node's bandwidth as $B$.

During the Propose step, the leader employs systematic RS codes over the finite field $\mathbb{F}_{2^l}$, where $2^l > n$, to yield $n$ coded segments. The size of each segment is given by $\frac{m}{lk}$. The encoding includes $(n-k)\frac{m}{lk}k$ finite field operations. Furthermore, the leader needs to individually sign each of the \(n\) proposals, implying the computation of \(n\) unique hash values, where every hash demands a separate signature. Given that hash functions such as Keccak~\cite{bertoni2013keccak} often employ a suite of fixed-size input functions, the computational overhead for hashing scales linearly with the segment's size when it's large. Subsequent to transmitting the signed proposals, all nodes undertake the task of signature verification. We use $t_f$, $t_{hash}$, $t_{sig}$, and $t_{ver}$ to respectively denote the time of a finite field operation, hashing a segment, signing a hash, and verifying a signature. Note that the leader node needs to compute the Merkel root of the block with time costs at most $2nt_{hash}$.

The anticipated time cost in the Propose step is
\begin{align*}
    T_p=&\frac{(n-k)mt_f}{l}+n(t_{sig}+t_{hash})+2nt_{hash}+\delta_p+\frac{mn}{Bk}\\
    &+t_{ver}+t_{hash}.
\end{align*}

In the Re-propose step, each node forwards at most two proposals and receives at most $2n$ proposals, among which at most $n$ distinct proposals need to be verified. The expected time cost is 
\begin{equation*}
    T_r=\delta_p+\frac{2mn}{Bk}+n(t_{hash}+t_{ver}).
\end{equation*}

In the Vote step, a node needs to decode a message with computational costs $\frac{km}{l}$ finite field operations. Then the Merkel root of block needs to be recomputed at each node with costs $\frac{(n-k)mt_f}{l}+2nt_{hash}$. Then each node signs the vote for the Merkel root with costs $t_{sig}$. A vote message signed at the node is of small size and the corresponding transmission delay can be omitted. Then each nodes verifies the received $n$ votes with cost $nt_{ver}$. Hence, the expected time cost in the Vote step is 
\begin{equation*}
    T_v=\frac{kmt_f}{l}+\frac{(n-k)mt_f}{l}+2nt_{hash}+t_{sig}+\delta_p+nt_{ver}.
\end{equation*}
The necessity for both encoding and Merkle tree generation costs arises from the requirement that a node, post-decoding, must verify the decoding result with the identifier.

Summed up, the expected actual delay of a round of consensus is 
\begin{align*}
    T_{\text{Hamster}}&=T_p+T_r+T_v \\
    &\approx3\delta_p+\frac{3mn}{Bk}+\frac{(2n-k)mt_f}{l}+nt_{sig}+2nt_{ver}+6nt_{hash}.
\end{align*}
Assuming that each client is of size $c$, the throughput in the Steady-State is 
\begin{align*}
     Tr_{\text{Hamster}}&=m/(cT_{\text{Hamster}}) \\
      &\approx\frac{m/c}{3\delta_p+\frac{3mn}{Bk}+\frac{(2n-k)mt_f}{l}+nt_{sig}+2nt_{ver}+6nt_{hash}}.
\end{align*}

In Sync HotStuff, hashing a single block is computationally equivalent to hashing a segment $f+1\approx n/2$ times in Hamster when the block is large. Following the similar analysis, the corresponding average time cost and throughput of Sync HotStuff are 
\begin{equation*}
    T_p=\frac{n}{2}t_{hash}+t_{sig}+\delta_p+\frac{mn}{B}+\frac{n}{2}t_{hash}+t_{ver},
\end{equation*}
\begin{equation*}
    T_v=\frac{n}{2}t_{hash}+t_{sig}+\delta_p+\frac{mn}{B} +nt_{ver},
\end{equation*}
\begin{equation*}
    T_{\text{sync}}=T_p+T_v =2\delta_p+\frac{2mn}{B}+2t_{sig}+(n+1)t_{ver}+\frac{3n}{2}t_{hash},
\end{equation*}
and 
\begin{equation*}
    Tr_{\text{sync}}= m/(cT_{\text{sync}})\approx\frac{m/c}{2\delta_p+\frac{2mn}{B}+nt_{ver}+\frac{3n}{2}t_{hash}}.
\end{equation*}


Now we analyze the throughput of Hamster with consideration of Follow phase.
Follow phase introduce two more steps. In the Proof with Proof step, additional Merkle proof of a segment is transmitted, compared with the Propose step in Steady-State phase, hence the time cost is 
\begin{align*}
    T_{pp}=&\frac{(n-k)mt_f}{l}+nt_{sig}+3nt_{hash}+\delta_p+\frac{(m/k+\log n)n}{B}\\
    &+t_{ver}+t_{hash}.
\end{align*}
The Repropose with Proof step costs 
\begin{equation*}
    T_{pr}=\log nt_{hash}+\delta_p+\frac{(m/k+\log n)n}{B}+t_{hash}+t_{ver},
\end{equation*}
where the first terms counts the cost of computing the Merkle root using Merkle proof and the following two terms count the cost of forwarding the message.
In the Follow Commit step, each honest node verifies at most $n$ messages and decodes the content with cost
\begin{equation*}
    T_{fc}=n(t_{hash}+t_{ver})+\frac{m}{lk}k^2t_f=\frac{mk}{l}t_f.
\end{equation*}

The total time cost of the Follow phase, by summing $T_{pp}$, $T_{pr}$ and $T_{pc}$ up, is 
\begin{align*}
    T_{follow}\approx &2\delta_p+\frac{2(m/k+\log n)n}{B}+\frac{mnt_f}{l}+4nt_{hash}+nt_{sig}\\
    &+n_{ver}.
\end{align*}
As $T_{follow}<T_{Hamster}$, it is reasonable to halve the throughput of Hamster with considering Follow phase.

\subsubsection{Discussion of Single Delays}
\label{sec::single-delay}
We first consider scenarios where one component's delay is so significant that it dominates, rendering the delays of the other two components negligible.

When the transmission delay is the dominant factor, Hamster can achieve \(\frac{2k}{3}\) times the throughput of Sync HotStuff. Taking \(k=f+1\), this results in a throughput gain of approximately $\frac{n}{3}$ without considering the Follow.

In situations where the propagation delay is dominant, Hamster's throughput is clearly \(\frac{2}{3}\) that of Sync HotStuff.



\subsubsection{Delay Transition}
Our previous discussions highlight that while Hamster exhibits a throughput advantage in bandwidth-limited settings, this edge might diminish under other environments. 
However, when the propagation delay dominates, its contribution to the total delay remains independent of the size \(m\). By increasing \(m\), the proportion of the propagation delay in the total delay decreases, eventually leading to either the communication delay or computation delay becoming the dominating component.


\subsubsection{Latency}
Latency is another critical performance metric and is defined as the average time taken between the client  initiates clients and receives responses. Let $t_{req}$ and $t_{res}$ denote the time taken to send requests from client to the leader and send responses from nodes to client, respectively, then the 
latency for Hamster is approximately $T_{\text{Hamster}}+2\delta_p+t_{req}+t_{res}$, while for Sync HotStuff, it is about $T_{\text{Sync}}+2\delta_p+t_{req}+t_{res}$. 

Note that the latency of both protocols is positively correlated with $m$, indicating that increasing throughput brings in a larger latency. When the transmission delay dominates the actual delay, the latency of Hamster is smaller than Sync HotStuff with the same throughput. By adjusting the block size, Hamster can simultaneously achieve higher throughput and lower latency than Sync HotStuff, justified in the experiment in Section~\ref{sec::experiment}.

\begin{figure*}
\begin{mdframed}
\textcolor{gray}{
Let $v$ be the current view number and $L$ be the leader of the current view. In view $v$, a node $r$ runs the following steps in the steady state.}
\begin{enumerate}
\renewcommand{\labelenumi}{}
\item \textcolor{gray}{\textbf{Propose.} Upon obtaining  $C_v(I_{h-1})$, $L$ sends the proposal in the form of  $\langle \text{proposal},Y_{h}^{r},v,C_v(I_{h-1})\rangle_L$ to node $r$ for each $r$.}
\item \textcolor{gray}{\textbf{Re-Propose.} Every node forwards the first received proposal and its own proposal to all the other nodes.}
\item \textcolor{gray}{\textbf{Vote.} Whenever $r$ assembles a decodable segment set, decode $\tilde{B}_h$. If no evidence observed, broadcast a \emph{vote} of the form $\langle \text{vote}, I_h, v\rangle_r$.}
\item \textbf{Pre-Commit (non-blocking).} Node $r$ set a $\text{commit-timer}$ to $2\Delta$ and starts to count down at the moment $r$ got $f+1$ new proposals of height $h+1$. 
\item \textbf{Commit (non-blocking).} When $\text{commit-timer}$ reaches zero, if node $r$ is still in view $v$, broadcast commit $\langle \text{commit}, I_{h}, v\rangle_r$.  If $r$ gets $f+1$ commits containing $I_h$, it commits on $I_h$.
\end{enumerate}
\end{mdframed}
\caption{The Steady-State protocol under mobile sluggish.}
\label{fig:slug_sync}
\end{figure*}

\section{Hamster under Mobile Sluggish Model}
\label{sec::sluggish}

The second disadvantage of synchronous BFT lies in its security dependency on the synchronous assumption. Considering network fluctuations, some nodes may temporarily violate the synchronous assumption at any time. Under standard synchrony, once a node violates this assumption, even briefly, it must be considered malicious indefinitely. Therefore, over time, the proportion of malicious nodes will gradually approach one, making the system insecure.

In the mobile sluggish model~\cite{GuoPS19}, honest nodes operate in two states: prompt and sluggish. These nodes can switch between these states at any time, unaware and unable to control. Prompt nodes can send and receive messages normally, as is typical in standard synchrony. In contrast, sluggish nodes are unable to send or receive messages; any messages they attempt to send or are meant to receive are buffered at the sending end and will be re-transmitted when both the sender and receiver are prompt. Despite their state, all sluggish nodes are still considered honest and must therefore uphold the property of safety. Guo has demonstrated that under this model, the upper bound for fault tolerance is \(n \geq 2(f + s) + 1\), where \(s\) represents the number of sluggish nodes.

The mobile sluggish model accurately solving the dependency issue because of network fluctuations. Adapting a protocol to this model means that it will not permanently label an honest node as malicious due to brief violation of the synchronous assumption. Instead, it allows the node to enter a sluggish mode. Once the network returns to normal, the node will revert to being prompt. In this way, as long as the total number of sluggish and malicious nodes does not exceed half, the system can remain secure.

\subsection{Steady-State Phase}
Inspired by~\cite{kim2021brief}, we slight modify Hamster to tolerate the mobile sluggish faults, based on the intuition that a message coming from $f+1$ nodes must be sent by at least one prompt honest node.
The protocol is shown in Fig.~\ref{fig:slug_sync}. The gray part is identical with the standard synchrony version, thus we only focus on the non-blocking part.

\textbf{Pre-commit (non-blocking).} 
Upon receiving the certified block \( C_v(I_h) \), \( L \) constructs distinct proposals, tailored for each node \( r \), in the format \( \langle \text{proposal}, Y_{h+1}^{r}, v, C_v(I_h) \rangle_L \) and sends to the corresponding node \( r \). This is the Proposal step for the new round.
When a node receives $f+1$ new proposals, it enters the pre-commit step and countdown for a duration of \( 2\Delta \), no matter whether the node has voted or not. Notably, these $f+1$ proposals may contain the same segment, but must come from different nodes.

\textbf{Commit (non-blocking).} 
If the countdown ends and node $r$ is still in this view, it sends a commit message $\langle \text{commit}, I_h \rangle_r$ to all nodes. 
If node $r$, after sending commit messages, receives $f+1$ commit messages with different signature all containing $I_h $, it commits on $I_h$.

The View-Change phase and Follow phase in the mobile sluggish model are exactly the same as those under standard synchrony in section~\ref{sec::Hamster}, thus is not repeated here.

\subsection{Safety and Liveness}
For safety proof, we extend Lemma~\ref{lemma:certificate} to Lemma~\ref{lemma:cert-sluggish} with mobile sluggish faults as follows.
\begin{lemma}
\label{lemma:cert-sluggish}
If an honest node \( r \) directly commits \( B_h \) in view \( v \), then 

(i) At least one prompt node successfully decoded \( B_h \) in view \( v \).

(ii) No nodes can generate an equivocating certificate in view \( v \).

(iii) At least $f+1$ honest node locks on a certificate ranked equal to or higher than \( B_h \) before entering view \( v+1 \).
\end{lemma}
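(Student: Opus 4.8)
\textbf{Proof proposal.} The plan is to lift the proof of Lemma~\ref{lemma:certificate} to the mobile sluggish setting, replacing ``honest node'' by the appropriate honest/prompt distinction and re-checking the timing budget against the extra $\Delta$-skews this model introduces. The one protocol change to absorb is that a direct commit of $I_h$ now rests on $f+1$ distinct commit messages, and the sender of each such commit started its $2\Delta$ commit-timer only after collecting $f+1$ height-$(h+1)$ proposals (all carrying $C_v(I_h)$) from distinct nodes. The observation used throughout is: among any $f+1$ distinct-signature messages at most $f$ are from malicious senders, so at least one sender is honest; and whenever producing that message forced its sender to first \emph{receive} $f+1$ messages (a set of commits, or a decodable segment set), that honest sender was prompt at the moment it did so, because sluggish nodes receive nothing.

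For (i), I would follow the dependency chain: a direct commit of $I_h$ by $r$ means $r$ gathered $f+1$ commits for $I_h$; one is from an honest node $q$; $q$ sent it only after holding $f+1$ proposals of height $h+1$, each carrying $C_v(I_h)$; hence $C_v(I_h)$ itself was formed from $f+1$ votes, at least one from an honest node; and since a Steady-State vote is cast only after assembling a decodable segment set, that honest node was prompt then and recovered $B_h$. (The corner case where $C_v(I_h)$ is a carry-over certificate minted in a view-change reduces, by induction on the view, to an earlier round with decoding-based votes, so a then-prompt honest node still holds the content.)

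For (ii), I would run the window argument of Lemma~\ref{lemma:certificate}(ii) through $q$ instead of through $r$. If $r$ commits at time $t$, then $q$'s commit-timer expired by $t$ with $q$ still in view $v$, so it started at some $t_0 \le t-2\Delta$ with $q$ already holding a height-$(h+1)$ proposal carrying $C_v(I_h)$; one of $q$'s $f+1$ such proposals was broadcast to all nodes by an honest prompt node in the re-propose step, so every honest node learns $C_v(I_h)$ (together with an $I_{h+1}$ extending $I_h$) within $\Delta$ of being prompt, hence before $t-\Delta$. Then an honest node cannot vote for an equivocating $I'$ in view $v$ after that point, since holding both the $I'$-proposal and the height-$(h+1)$ proposal constitutes equivocation evidence and it leaves the view rather than voting; and it cannot have voted earlier either, for then (being prompt to vote) it forwarded the $I'$-proposal in the re-propose step and $q$ would obtain that evidence before its $2\Delta$ countdown ended, abandoning view $v$ and never sending the commit $r$ relied on. Checking that the $2\Delta$ budget still covers the relevant one-hop broadcasts \emph{plus} the at-most-$\Delta$ skew between honest nodes entering and leaving view $v$, now with $q \ne r$, is the step I expect to be the main obstacle; it is the sluggish-model counterpart of the $[t-2\Delta,t]$ reasoning in Lemma~\ref{lemma:certificate}.

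For (iii), the weakening from ``every honest node'' to ``at least $f+1$ honest nodes'' is precisely what the bound $n \ge 2(f+s)+1$ supplies, since at any instant at least $n-f-s \ge f+1$ honest nodes are prompt. I would take the set $P$ of honest nodes prompt at time $t_0+\Delta$ (with $t_0$ as above): each holds the height-$(h+1)$ proposal, hence $C_v(I_h)$, by that time; by (ii) and the Lemma~\ref{lemma:extend}-style argument no certificate any of them ever sees in view $v$ equivocates $I_h$, so when it leaves view $v$ its status step locks a certificate of rank at least $(v,h)$. As $|P| \ge f+1$, this gives $f+1$ honest nodes locked on a certificate ranked at least $I_h$ before entering view $v+1$, which is what the safety argument of the next theorem will consume.
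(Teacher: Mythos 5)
Your proposal follows essentially the same route as the paper's proof: anchor the timing at an honest member of the $f{+}1$ commit-senders, use the $f{+}1$ height-$(h{+}1)$ proposals to extract a prompt forwarder so that every node prompt at $t-\Delta$ holds $C_v(I_h)$ by then, argue (i) by chaining down to an honest voter that must have been prompt to assemble a decodable segment set, and obtain (iii) from the bound $n \ge 2(f{+}s){+}1$, which guarantees $f{+}1$ prompt honest nodes at the relevant instant. The one substantive divergence is in (ii): you try to rule out an equivocating vote by \emph{every} honest node via evidence propagating back to $q$, which does not go through when the voter or $q$ turns sluggish and the evidence is buffered, whereas the paper only rules out votes from the set $P$ of nodes prompt at $t-\Delta$ and then closes by counting---$|P|\ge f{+}1$, so the at most $f$ nodes outside $P$ cannot assemble an equivocating certificate even if a sluggish honest node voted badly---which is precisely the missing piece behind the ``main obstacle'' you flagged.
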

\begin{proof}
Node $r$ directly commits block $I_h$ in view $v$, indicating that at least $f+1$ nodes have pre-committed $B_h$. We denote these nodes as $R$, and since $R$ includes $f+1$ nodes, there must be at least one prompt node at any time. Suppose $r_1$ is the first to pre-commit in $R$, doing so at time $t$. Let $P$ represent the set of all prompt nodes at time $t-\Delta$. Since the number of prompt nodes is no less than $f+1$ at any moment, the number of nodes in $P$ is also not less than $f+1$.

For proposition (i), given that $r$ directly commits $I_h$ in view $v$, it must have received $\langle \text{proposal}, Y_{h+1}^{r^*}, v, C_v(I_h) \rangle_L$. This suggests that $I_h$ received at least $f+1$ votes. Among these, at least one must have originated from a prompt node, denoted as $r'$. This indicates that $r'$ has successfully decoded $B_h$ and sent a vote message.

For proposition (ii), $r_1$ completing the countdown at time $t$ implies that it had gathered $f+1$ $\langle \text{proposal}, Y_{h+1}^{r^*}, v, C_v(I_h) \rangle_L$ messages by $t-2\Delta$, with at least one coming from a prompt node. Consequently, nodes in $P$ will receive at least one new proposal by $t-\Delta$. Nodes in $P$, after $t-\Delta$, would obviously not vote for an equivocation block. Before $t-\Delta$, if any node in $P$ voted for an equivocation block, at least one node in $R$ would receive evidence before $t$, thus preventing a pre-commit. Since there are more than $f+1$ nodes in $P$, even if a sluggish honest node voted incorrectly, a new certificate could not emerge.

For proposition (iii), nodes in $P$ cannot leave view $v$ before $t-\Delta$; otherwise, their sent evidence would arrive at $R$ before $t$, causing at least one prompt node to abstain from pre-committing. Also, since $r_1$ pre-committed at time $t$, it must have already been in view $v$ by $t-2\Delta$, ensuring that nodes in $P$ had entered view $v$ by $t-\Delta$. Since $r_1$ received new proposal at $t-2\Delta$, all nodes in $P$ will acquire $C_v(I_h)$ before leaving view $v$.
\end{proof}

Subsequent theorems can be deduced by following the same framework as in the standard synchronous setting, culminating in the demonstration of safety. The liveness property, however, only holds when all honest nodes are prompt. 

\subsection{Discussion about Optimistic Responsiveness} 

We briefly discuss the optimistic responsiveness mode~\cite{pass2018thunderella}, which focuses on the ideal performance of the system when the number of honest nodes is sufficiently large, i.e., \(n\geq 4f+1\). In Sync HotStuff, a responsive extension is implemented through a mode switching method. The system monitors the amount of malicious behaviors through the count of votes. If there are fewer than $n/4$ malicious acts, the leader may initiate a special 'View-Change' to enter the responsive mode. Conversely, if excessive malicious behavior is detected while in the responsive mode, it triggers a View-Change, and the leader is replaced. This responsive mode could be straightforwardly extended to Hamster and is omitted due to the page limit.

Furthermore, such a generalization would not be beneficial due to the following simple attack. Malicious nodes might engage in wave-like behaviors, voting normally in regular mode to prompt the leader into the responsive mode, then refusing to vote to force the system back to normal mode. This responsive solution only benefit when the system naturally ensures a low number of malicious nodes, where however using synchronous BFT protocols have no comparative advantage.

An unswitching responsive mode, which yields significant benefits, is proposed in~\cite{abraham2021optimal}. However, implementing such a scheme in Hamster seems quite untrivial and needs many details. We will explore a corresponding protocol in future work.

\section{Evaluation}
\label{sec::experiment}
Our implementation of Hamster is based on the open-source code of Sync HotStuff\footnote{https://github.com/hot-stuff/librightstuff}. We reuse some utilities of Sync HotStuff, such as network, event queue, and cryptographic libraries, but replace its core protocol logic with Hamster's, thereby enabling a fair comparison. We noticed an omission in Sync HotStuff's source code: the module for forwarding the proposal during the Vote step is missing, leading to unfairly low communication costs. We rectified this by adding the missing module, establishing a more accurate baseline for comparison.
We extract the encoding and decoding implementations of Reed-Solomon codes from ISA-L~\cite{ISAL}.

As previously mentioned, we employ the client-service model, assuming each request reaches all nodes prior to ordering.
The block structure of our implementation is identical to that of Sync HotStuff. Each request, uniquely identified by its identifier value, is then packaged into a block as a batch of hash values. The number of hash values in a single block is denoted as the batch size. The block size is roughly the product of the batch size and hash size, hence adjusting the batch size equivalently adjusts the block size.
For simplicity, we use batch size in the experiment.

\subsection{Experimental Setup}
\label{sec:setup}
Our experiments are conducted on AWS EC2 C5.4xlarge instances. Specifically, this instance type features Intel Xeon Platinum 8000 series processors (Skylake-SP) with 16 vCPUs and 32 GiB of ECC memory. The processor boasts a sustained all-core Turbo frequency of up to 3.4 GHz. We equip each instance with the Ubuntu 20.04 operating system. The maximum bandwidth, as measured by \emph{iperf}, is approximately 592 MBps for an instance, and the measured propagation delay is less than 1ms. In our experiments, we use \emph{tc} to restrict the bandwidth for all nodes. Unless otherwise specified, our propagation delay is based on actual measurements.
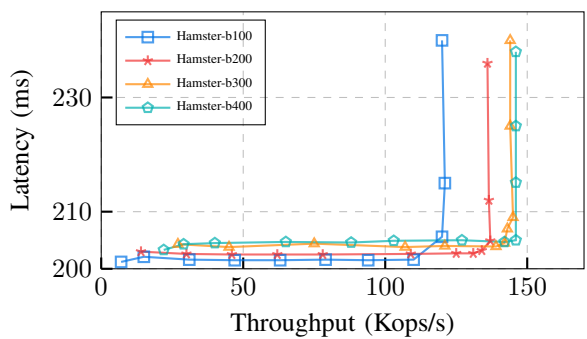
\begin{figure}[!ht]
    \centering
    \begin{tikzpicture}
        \begin{axis}[
        width=8cm,height=5cm,
        xlabel={Throughput (Kops/s)},ylabel={Latency (ms)},
        xmin=0, xmax=170,ymin=200, ymax=245,
        xtick={0,50,100,150},ytick={200,210,230},
        grid=both,grid style={dashed},
        major tick style={thick,black},
        xtick pos=bottom,ytick pos=left,
        legend pos= north west
        ]
        \addplot[soft_blue,thick,mark=square,opacity=0.7] coordinates {
            (7,201.2)(15,202.1)(31,201.6)(47,201.5)(63,201.5)(79,201.6)(94,201.5)(110,201.6)(120,205.6)(121,215)(120,240)};
         \addplot[soft_red,thick,mark=star,opacity=0.7] coordinates {
            (14,203)(30,202.6)(46,202.5)(62,202.5)(78,202.5)(109,202.6)(125,202.7)(131,202.7)(134,203.2)(137,204.9)(136.5,212)(136,236)};    
        \addplot[adjusted_orange,thick,mark=triangle,opacity=0.7] coordinates {
            (27,204.3)(45,203.8)(75,204.4)(107,203.8)(121,204.)(139,203.9)(142,204.5)(143,207)(145,209)(144,225)(144,240)};
        \addplot[adjusted_cyan,thick,mark=pentagon,opacity=0.7] coordinates {
            (22,203.3)(29,204.3)(40,204.5)(65,204.7)(88,204.6)(103,204.9)(127,205)(142,204.7)(146,205)(146,215.1)(146,225)(146,238)};
        \pgfplotsset{
            legend style={fill=none, draw=black, text opacity=1, fill opacity=1,font=\tiny }}
        \legend{Hamster-b100,Hamster-b200,Hamster-b300,Hamster-b400,at={(current bounding box.north west)}}
    \end{axis}
    \end{tikzpicture}
    \caption{Throughput vs. latency of Hamster at varying batch sizes, with $\Delta=100$ ms, $n=3$ and bandwidth $=80$ Mbps}

    \label{fig:f=1,tpvs.latency}
\end{figure}
Each node operates within an instance, and for the clients, we utilize four instances, each running four client processes, to generate requests and dispatch them to all nodes. Each client maintains a request pool, holding a constant number of currently outstanding requests. Once the client receives $f+1$ consistent responses, indicating a block of requests has been successfully processed, it removes the completed requests from the pool and adds new ones. Consequently, latency is defined as the average time from the initiation of a request to its departure from the pool.

It's clear that a large pool size ensures there are sufficient requests to be batched, allowing full utilization of throughput gains. However, an excessively large pool size means that some requests will have to wait to be packaged, resulting in significant latency. As shown in the experiment in Fig.~\ref{fig:f=1,tpvs.latency}, by fixing \(\Delta\), \(n\), \(\delta_p\), and \(B\), we obtain the throughput and latency for each pool size under different batch sizes. In each curve, there is an inflection point beyond which throughput ceases to increase, but latency surges significantly. This inflection point represents the appropriate pool size. In the following text, unless otherwise specified, we will use the appropriate pool size.


\subsection{Performance}

\subsubsection{Batch Size Influence}

Batch size is a key parameter that not only significantly influences performance, but also impacts whether the system operates within a bandwidth-limited environment.

\begin{figure}[!ht]
    \centering
    \begin{tikzpicture}
        \begin{axis}[
        width=8cm,
        height=5cm,
        xlabel={Bandwidth (Mbps)},
        ylabel={Throughput (Kops/s)},
        xmin=0, xmax=1600,
        ymin=0, ymax=310,
        xtick={80,400,800,1600},
        ytick={10,50,100,150,200},
        grid=both,
        grid style={dashed},
        major tick style={thick,black},
        xtick pos=bottom,
        ytick pos=left,
        tick align=center,
        legend pos= north west,
        ]
        \addplot[soft_red,thick,mark=triangle,opacity=0.35] coordinates {
            (80,21)(160,37)(400,60)(800,63)(1200,63)(1600,63)
        };
        \addplot[soft_blue,thick,mark=triangle,opacity=0.35] coordinates {
            (80,6)(160,12)(400,29)(800,54)(1200,75)(1600,75)
        };
        \addplot[soft_red,thick,mark=square,opacity=0.9] coordinates {
            (80,43)(160,78)(400,135)(800,158)(1200,162)(1600,165)
        };
        \addplot[soft_blue,thick,mark=square,opacity=0.9] coordinates {
            (80,8)(160,14)(400,33)(800,57)(1200,77)(1600,89)
        };
        \pgfplotsset{
            legend style={fill=none, draw=black, text opacity=1, fill opacity=1,font=\tiny }}
        \legend{Hamster-b400,Sync HS-b400,Hamster-b2000,Sync HS-b2000}
    \end{axis}
    \end{tikzpicture}
    \caption{Throughput vs. bandwidth at varying batch size, at $n$=33.}
    \label{fig:n=33,varying bd}
\end{figure}
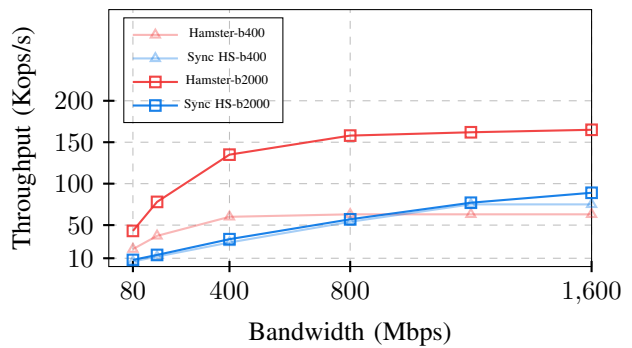

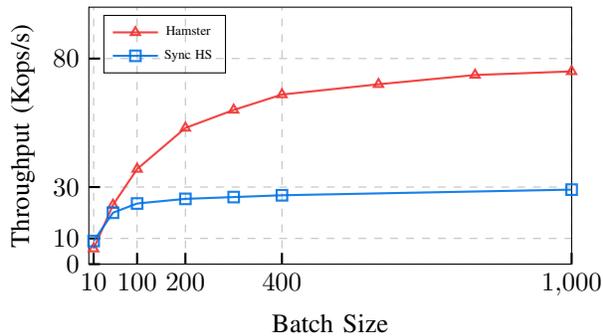
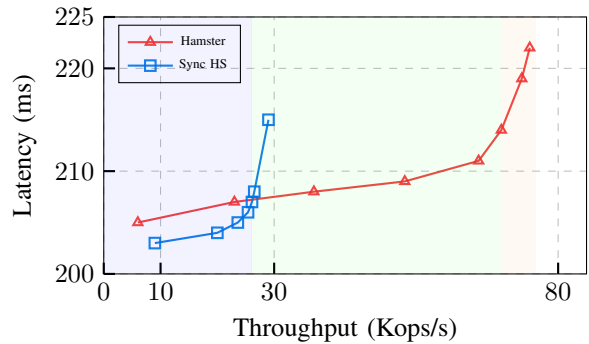
\begin{figure*}[t]
    \centering
    \begin{subfigure}{0.45\textwidth}
    \centering
    \begin{tikzpicture}
        \begin{axis}[
        width=8cm,
        height=5cm,
        xlabel={Batch Size},
        ylabel={Throughput (Kops/s)},
        xmin=0, xmax=1000,
        ymin=0, ymax=100,
        xtick={10,100,200,400,1000},
        ytick={0,10,30,80},
        grid=both,
        grid style={dashed},
        major tick style={thick,black},
        xtick pos=bottom,
        ytick pos=left,
        legend pos= north west,
        ]
        \addplot[soft_red,thick,mark=triangle,opacity=0.9] coordinates {
            (10,6)(50,23)(100,37)(200,53)(300,60)(400,66)(600,70)(800,73.6)(1000,75)
        };
        \addplot[soft_blue,thick,mark=square,opacity=0.9] coordinates {
            (10,9)(50,20)(100,23.6)(200,25.4)(300,26.1)(400,26.8)(1000,29)
        };
        \pgfplotsset{legend style={fill=none, draw=black, text opacity=1, fill opacity=1,font=\tiny }}
        \legend{Hamster,Sync HS}
    \end{axis}
    \end{tikzpicture}
    \caption{Throughput over batch size.}
    \label{fig:n=9,bd=80,tp vs.bsize}

    \end{subfigure}
    \begin{subfigure}{0.45\textwidth}
    \centering
    \begin{tikzpicture}
        \begin{axis}[
        width=8cm,height=5cm,
        xlabel={Throughput (Kops/s)},ylabel={Latency (ms)},
        xmin=0, xmax=85,ymin=200, ymax=225,
        xtick={0,10,30,80},ytick={200,210,220,225},
        grid=both,grid style={dashed},
        major tick style={thick,black},
        xtick pos=bottom,ytick pos=left,
        legend pos= north west
        ]
        \addplot[soft_red,thick,mark=triangle,opacity=0.9] coordinates {
            (6,205)(23,207)(37,208)(53,209)(66,211)(70,214)(73.6,219)(75,222)};
         \addplot[soft_blue,thick,mark=square,opacity=0.9] coordinates {
            (9,203)(20,204)(23.6,205)(25.4,206)(26.1,207)(26.5,208)(29,215)};    
        
        \pgfplotsset{
            legend style={fill=none, draw=black, text opacity=1, fill opacity=1,font=\tiny }}
        \legend{Hamster,Sync HS,at={(current bounding box.north west)}}
        \fill[blue, opacity=0.05] (axis cs:0,200) rectangle (axis cs:26,225);
        \fill[green, opacity=0.05] (axis cs:26,200) rectangle (axis cs:70,225);
        \fill[orange, opacity=0.05] (axis cs:70,200) rectangle (axis cs:76,225);
    \end{axis}
    \end{tikzpicture}
    \caption{Throughput vs. latency at varying batch sizes.}
    \label{fig:n=9,tpvs.latency}
    \end{subfigure}

    \caption{Performance comparison at $\Delta=100$ ms, $n=9$ and bandwidth $=80$ Mbps.}
    \label{fig:n=9,varying batchsize}
\end{figure*}

\begin{figure*}[!ht]
    \centering
    \begin{subfigure}{0.45\textwidth}
         \centering
    \begin{tikzpicture}
        \begin{axis}[
        width=8cm,
        height=5.5cm,
        xlabel={Number of Nodes},
        ylabel={Throughput (Kops/s)},
        xmin=0, xmax=32,
        ymin=0, ymax=300,
        xtick={1,2,4,8,16,32},
        xticklabels={$3$,$5$,$9$,$17$,$33$,$65$},
        ytick={0,50,150,300},
        grid=both,
        grid style={dashed},
        major tick style={thick,black},
        xtick pos=bottom,
        ytick pos=left,
        ]
        \addplot[soft_red,thick,mark=triangle,opacity=0.35] coordinates {
            (1,145)(2,95)(4,67)(8,51)(16,41)(32,30)
        };
        \addplot[soft_blue,thick,mark=triangle,opacity=0.35] coordinates {
            (1,130)(2,57)(4,28)(8,12.8)(16,6)(32,3)
        };
         \addplot[soft_red,thick,mark=square,opacity=0.9] coordinates {
            (1,278)(2,165)(4,135)(8,102)(16,72)(32,50)
        };
        \addplot[soft_blue,thick,mark=square,opacity=0.9] coordinates {
            (1,210)(2,106)(4,51)(8,25)(16,12)(32,6)
        };

        \pgfplotsset{legend style={fill=none, draw=black, text opacity=1, fill opacity=0.7,font=\tiny }}
        \legend{Hamster-bd80, Sync HS-bd80,Hamster-bd160,Sync HS-bd160}

    \end{axis}
    \end{tikzpicture}
    \caption{Throughput over node numbers at varying bandwidth.}

    \label{fig:throughput-faults}
    \end{subfigure}
    \begin{subfigure}{0.45\textwidth}
         \centering
    \begin{tikzpicture}
        \begin{axis}[
        width=8cm,
        height=5.5cm,
        xlabel={Number of Nodes},
        ylabel={Latency (ms)},
        xmin=0, xmax=32,
        ymin=200, ymax=300,
        xtick={1,2,4,8,16,32},
        xticklabels={$3$,$5$,$9$,$17$,$33$,$65$},
        ytick={200,250,300},
        grid=both,
        grid style={dashed},
        major tick style={thick,black},
        xtick pos=bottom,
        ytick pos=left,
        legend pos= north west,
        ]
        \addplot[soft_red,thick,mark=triangle,opacity=0.35] coordinates {
            (1,204)(2,209)(4,220)(8,232)(16,266)(32,293)};
        \addplot[soft_blue,thick,mark=triangle,opacity=0.35] coordinates {
            (1,202)(2,204)(4,212)(8,225)(16,256)(32,271)};
        \addplot[soft_red,thick,mark=square,opacity=0.9] coordinates {
            (1,202)(2,207)(4,215)(8,229)(16,257)(32,282)};
        \addplot[soft_blue,thick,mark=square,opacity=0.9] coordinates {
            (1,201)(2,204)(4,208)(8,216)(16,239)(32,262)};

        \pgfplotsset{legend style={fill=none, draw=black, text opacity=1, fill opacity=0.7,font=\tiny }}
        \legend{Hamster-bd80,Sync HS-bd80,Hamster-bd160,Sync HS-bd160}

    \end{axis}
    \end{tikzpicture}
    \caption{Latency vs. node numbers at varying bandwidth.}

    \label{fig:latency-falties}
    \end{subfigure}
   
    \caption{Performance comparison at $\Delta=100$ ms and appropriate batch size.}
    \label{fig:under different number of node}
\end{figure*}
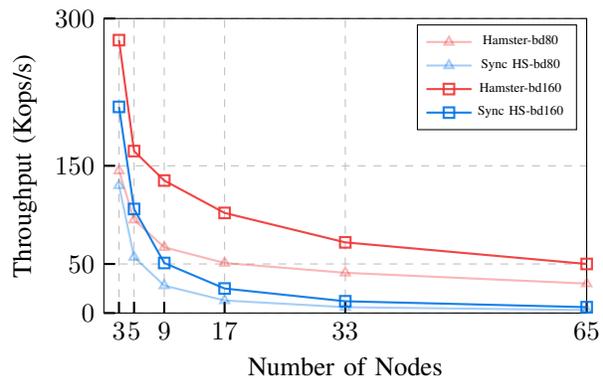
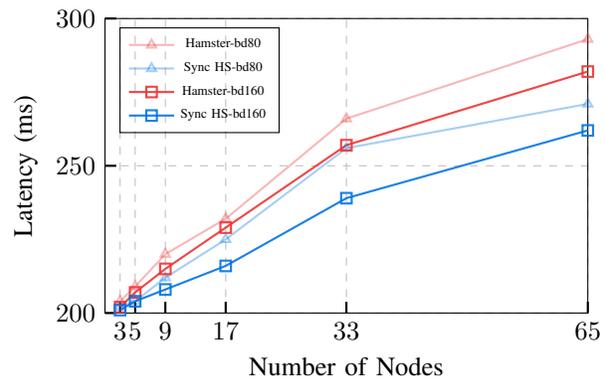

\begin{figure}[!ht]
    \centering
    \begin{tikzpicture}
        \begin{axis}[
        width=8cm,
        height=5cm,
        xlabel={Number of Nodes},
        ylabel={Gain},
        xmin=0, xmax=32,
        ymin=0, ymax=10,
        xtick={1,2,4,8,16,32},
        xticklabels={$3$,$5$,$9$,$17$,$33$,$65$},
        ytick={0,1,5,10},
        grid=both,
        grid style={dashed},
        major tick style={thick,black},
        xtick pos=bottom,
        ytick pos=left,
        legend pos= north west,
        tick align=center,
        ]

       \addplot[adjusted_cyan,thick,mark=square,opacity=0.9] coordinates {
            (1,1.32)(2,1.55)(4,2.64)(8,4.08)(16,6)(32,10)};
        \addplot[adjusted_cyan,thick,mark=triangle,opacity=0.35] coordinates {
            (1,1.12)(2,1.67)(4,2.39)(8,3.98)(16,6.83)(32,8.5)};
        \addplot[adjusted_orange,thick,mark=square,opacity=0.9] coordinates {
            (1,1.02)(2,1.05)(4,1.07)(8,1.056)(16,1.06)(32,1.07)};
        \addplot[adjusted_orange,thick,mark=triangle,opacity=0.35] coordinates {
            (1,1.01)(2,1.03)(4,1.06)(8,1.11)(16,1.13)(32,1.08)};
        \pgfplotsset{legend style={fill=none, draw=black, text opacity=1, fill opacity=0.7,font=\tiny }}
        \legend{TPg-bd160,TPg-bd80,Latencyg-bd160,Latencyg-bd80}
        \end{axis}
    \end{tikzpicture}
    \caption{Throughput and latency gains over nodes at varying bandwidth.}
    \label{fig:gain,varying nodes}
\end{figure}
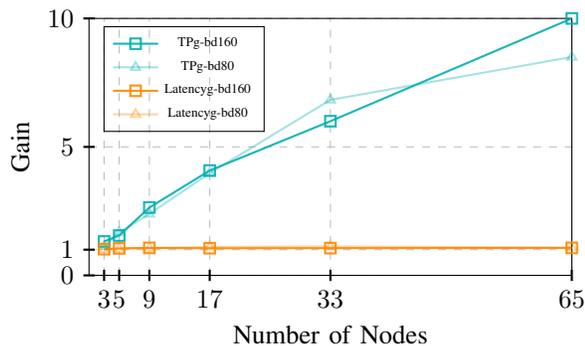

\begin{figure}[!ht]
    \centering
    \begin{tikzpicture}
        \begin{axis}[
        width=8cm,
        height=5cm,
        xlabel={Propgation Delay (ms)},
        ylabel={Throughput (Kops/s)},
        xmin=0, xmax=400,
        ymin=0, ymax=12,
        xtick={10,50,100,200,400},
        ytick={0,5,10},
        grid=both,
        grid style={dashed},
        major tick style={thick,black},
        xtick pos=bottom,
        ytick pos=left,
        tick align=center,
        ]
         \pgfplotsset{legend style={fill=none, draw=black, text opacity=1, fill opacity=0.7,font=\tiny }}
        \addplot[soft_red,thick,mark=triangle,opacity=0.35] coordinates {
            (10,8.4)(20,5.2)(50,2.4)(100,1.6)(200,0.8)(400,0.4)
        };
        \addlegendentry{Hamster-b400}
        \addplot[soft_blue,thick,mark=triangle,opacity=0.35] coordinates {
            (10,4.8)(20,4)(50,2.4)(100,1.6)(200,0.8)(400,0.48)
        };
        \addlegendentry{Sync HS-b400}
        
        \addplot[soft_red,thick,mark=square,opacity=0.9] coordinates {
            (10,28)(20,20)(50,10)(100,6)(200,4)(400,1.8)
        };
        \addlegendentry{Hamster-b2000}
        \addplot[soft_blue,thick,mark=square,opacity=0.9] coordinates {
            (10,8)(20,7.2)(50,6)(100,4.8)(200,4)(400,2)
        };
        \addlegendentry{Sync HS-b2000}
    \end{axis}
    \end{tikzpicture}
    \caption{Throughput vs. propagation delay at varying batch sizes, with $n$=33 and bandwidth$=80$ Mbps.}

    \label{fig:n=33,varying pd}
\end{figure}
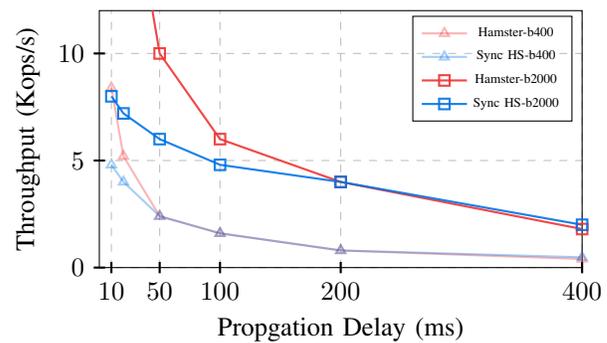
Fig.~\ref{fig:n=33,varying bd} examines the relationship between bandwidth limitation and batch size. 
We characterize it as a bandwidth-limited environment when bandwidth resource is the key bottleneck and the throughput increases nearly linearly as the bandwidth. 
Our experiments demonstrate that increasing the batch size extends the bandwidth range within which the environment is considered bandwidth-limited. Notably, Sync HotStuff shows a broader bandwidth-limited range, independently on the batch size, as Sync HotStuff is bandwidth consuming, confirming our analysis.

Fig.\ref{fig:n=9,bd=80,tp vs.bsize} depicts how batch size influences the performance of both protocols in a typical bandwidth setting. When the pool size is consistently well-chosen, an increase in batch size results in higher throughput, which eventually stabilizes as the bandwidth is limited. 

\subsubsection{Basic Performance and Trade-Off}

Fig.~\ref{fig:n=9,tpvs.latency} demonstrates the performance advantages of Hamster, with the batch sizes corresponding to those used in Fig.~\ref{fig:n=9,bd=80,tp vs.bsize}. 
In the lower-left section of the graph, marked by the blue shaded area, where Hamster's batch size is under 100, the protocol achieves lower throughput with higher latency due to sufficient bandwidth. When the batch size exceeds 100, Hamster enters a bandwidth-limited environment, corresponding to the green shaded section, where it achieves higher throughput and lower latency. In the yellow shaded area on the right, Hamster achieves maximum throughput gain.


\subsubsection{ Maximum Throughput Gain}


The maximum throughput performance over number of nodes is depicted in Fig.~\ref{fig:throughput-faults}. From Fig.~\ref{fig:gain,varying nodes}, as the number of nodes increases, Hamster gradually achieves a linear growth in throughput gain, with a slope of approximately \(\frac{1}{6}\) under the bandwidth of 80Mbps, aligning with our analysis. Hamster's latency is slightly higher than that of Sync HotStuff, as shown in Fig.~\ref{fig:latency-falties}. But the latency ratio of Hamster over Sync HotStuff remains nearly constant and is approximately 1, as shown in Fig.~\ref{fig:gain,varying nodes}. It is important to note that this slightly longer latency in Hamster can be removed by the trade-off with throughput, as analyzed within Fig.~\ref{fig:n=9,tpvs.latency}.


\subsubsection{Environment Switch: Propagation Delay}

Fig.~\ref{fig:n=33,varying pd} discusses the relationship between propagation delay environments and bandwidth-limited environments. As the propagation delay increases, Hamster's throughput advantage gradually diminishes, indicating a shift from an environment dominated by transmission delays to one dominated by propagation delays. Increasing the batch size can restore Hamster's advantage under such delays, suggesting that enlarging the batch size can transform a propagation delay-dominated environment back into a bandwidth-limited environment.

\section{Conclusion}
\label{sec::conclusion}

In this paper, we introduce Hamster, an efficient synchronous BFT protocol. Hamster employs coding techniques to decrease communication complexity and is capable of tolerating mobile sluggish faults. Compared with Sync HotStuff, Hamster provides linear gains in throughput as the system scales, while maintaining competitive latency in bandwidth-limited environments. Future research will focus on further enhancing the protocol's practicality, especially in the development of optimistic responsive modes.

\bibliographystyle{IEEEtran}
\bibliography{bib}

\end{document}